\newcommand{\temporaryremove}[1]{}
\def \BISI{\sim}
\def \CL{{\cal L}}
\newcommand\Real {{\mathbb R}}
\def \BP{\textbf{P}}
\def \CN{{\cal N}}
\def \IN{\it in}
\def \OUT{\it out}
\def \Var{{\it Var}}
\def \fv{{\it fv}}
\def \Pow#1{\mathcal P(#1)}
\newcommand{\Env}{{\sf Env}}
\def \CE{{\cal E}}
\newcommand{\boxm}[1]{[#1]}
\def \CM{{\cal M}}
\def\squareforqed{\hbox{\rlap{$\sqcap$}$\sqcup$}}
\def\qed{\ifmmode\squareforqed\else{\unskip\nobreak\hfil
\penalty50\hskip1em\null\nobreak\hfil\squareforqed
\parfillskip=0pt\finalhyphendemerits=0\endgraf}\fi}
\begin{document}
\title{Logical, Metric, and Algorithmic\\ Characterisations of Probabilistic Bisimulation}
\author{
Yuxin Deng$^{1}$ \qquad Wenjie Du$^2$\\
$^1$ Shanghai Jiao Tong University, China\\
$^2$ Shanghai Normal University, China}
\maketitle

\begin{abstract}
Many behavioural equivalences or preorders for probabilistic
processes involve a lifting operation that turns a relation on
states into a relation on distributions of states. We show that
several existing proposals for lifting relations can be reconciled
to be different presentations of essentially the same lifting
operation. More interestingly, this lifting operation nicely
corresponds to the Kantorovich metric, a fundamental concept used in
mathematics to lift a metric on states to a metric on distributions
of states, besides the fact the lifting operation is related to the
maximum flow problem in optimisation theory.

The lifting operation yields a neat notion of probabilistic
bisimulation, for which we provide logical, metric, and algorithmic
characterisations. Specifically, we extend the Hennessy-Milner logic
and the modal mu-calculus with a new modality, resulting in an
adequate and an expressive logic for probabilistic bisimilarity,
respectively. The correspondence of the lifting operation and the
Kantorovich metric leads to a natural characterisation of
bisimulations as pseudometrics which are post-fixed points of a
monotone function. We also present an ``on the fly" algorithm to
check if two states in a finitary system are related by
probabilistic bisimilarity, exploiting the close relationship
between the lifting operation and the maximum flow problem.
\end{abstract}

\section{Introduction}
In the last three decades a wealth of behavioural equivalences have
been proposed in concurrency theory. Among them, \emph{bisimilarity}
\cite{Mil89a,Par81} is probably the most studied one as it admits a
suitable semantics, an elegant co-inductive proof technique, as well
as efficient decision algorithms.

In recent years, probabilistic constructs have been proven useful
for giving quantitative specifications of system behaviour. The
first papers on probabilistic concurrency theory
\cite{GJS90,Chr90,larsenskou} proceed by \emph{replacing}
nondeterministic with probabilistic constructs. The reconciliation
of nondeterministic and probabilistic constructs starts with
\cite{HJ90} and has received a lot of attention in the literature
\cite{WL92,SL94,Lowe95,Seg95,HK97,MM97,BS01,JW02,mislove03,CIN05,TKP05,MM07,
DGHMZ07,DGMZ07,DGHM08,DGHM09}.

We shall also work in a framework that features the co-existence of
probability and nondeterminism. More specifically, we deal with
\emph{probabilistic labelled transition systems (pLTSs)}
\cite{DGHMZ07} which are an extension of the usual labelled
transition systems (LTSs) so that a step of transition is in the
form $s\ar{a}\Delta$, meaning that state $s$ can perform action $a$
and evolve into a distribution $\Delta$ over some successor states.
In this setting state $s$ is related to state $t$ by a relation
$\aRel$, say probabilistic simulation, written $s\aRel t$, if for
each transition $s\ar{a}\Delta$ from $s$ there exists a transition
$t\ar{a}\Theta$ from $t$ such that $\Theta$ can somehow simulate the
behaviour of $\Delta$ according to $\aRel$. To formalise the
mimicking of $\Delta$ by $\Theta$, we have to \emph{lift} $\aRel$ to
be a relation $\lift{\aRel}$ between distributions over states and
require $\Delta\lift{\aRel}\Theta$.

Various approaches of lifting relations have appeared in the
literature; see e.g. \cite{LS91,SL94,DGHMZ07,DD07,DGHM09}. We will
show that although those approaches appear different, they can be
reconciled. Essentially, there is only one lifting operation, which
has been presented in different forms. Moreover, we argue that the
lifting operation is interesting itself. This is justified by its
intrinsic connection with some fundamental concepts in mathematics,
notably \emph{the Kantorovich metric} \cite{Kan42}. For example, it
turns out that our lifting of binary relations from states to
distributions nicely corresponds to the lifting of metrics from
states to distributions by using the Kantorovich metric. In
addition, the lifting operation is closely related to \emph{the
maximum flow problem} in optimisation theory, as observed by Baier
et al. \cite{BEM00}.

A good scientific concept is often elegant, even seen from many
different perspectives. Bisimulation is one of such concepts in the
traditional concurrency theory, as it can be characterised in a
great many ways such as fixed point theory, modal logics, game
theory, coalgebras etc. We believe that probabilistic bisimulation
is also one of such concepts in probabilistic concurrency theory. As
an evidence, we will provide in this paper three characterisations,
from the perspectives of modal logics, metrics, and decision
algorithms.
\begin{enumerate}
\item Our logical characterisation of probabilistic bisimulation
consists of two aspects: \emph{adequacy} and \emph{expressivity}
\cite{Pnu85}. A logic $\CL$ is adequate when two states are
 bisimilar if and only if they satisfy exactly the same
set of formulae in $\CL$. The logic is expressive when each state
$s$ has a characteristic formula $\phi_s$ in $\CL$ such that $t$ is
bisimilar to $s$ if and only if $t$ satisfies $\phi_s$. We will
introduce a probabilistic choice modality  to capture the behaviour
of distributions. Intuitively, distribution $\Delta$ satisfies the
formula $\bigoplus_{i\in I}p_i\cdot\phi_i$ if there is a
decomposition of $\Delta$ into a convex combination some
distributions, $\Delta=\sum_{i\in I}p_i\cdot\Delta_i$, and each
$\Delta_i$ confirms to the property specified by $\phi_i$. When the
new modality is added to the Hennessy-Milner logic \cite{HM85} we
obtain an adequate logic for probabilistic bisimilarity; when it is
added to the modal mu-calculus \cite{Koz83} we obtain an expressive
logic.

\item By metric characterisation of probabilistic bisimulation, we
mean to give a pseudometric such that two states are bisimilar if
and only if their distance is $0$ when measured by the pseudometric.
More specifically, we show that bisimulations correspond to
pseudometrics which are post-fixed points of a monotone function,
and in particular bisimilarity corresponds to a pseudometric which
is the greatest fixed point of the monotone function.

\item As to the algorithmic characterisation, we propose an ``on the
fly" algorithm that checks if two states are related by
probabilistic bisimilarity. The schema of the algorithm is to
approximate probabilistic bisimilarity by iteratively accumulating
information about state pairs $(s,t)$ where $s$ and $t$ are not
bisimilar. In each iteration we dynamically constructs a relation
$\aRel$ as an approximant. Then we verify if every transition from
one state can be matched up by a transition from the other state,
and their resulting distributions are related by the lifted relation
$\lift{\aRel}$, which involves solving the maximum flow problem of
an appropriately constructed network, by taking advantage of the
close relation between our lifting operation and the above mentioned
maximum flow problem.
\end{enumerate}

\paragraph{Related work}
Probabilistic bisimulation was first introduced by Larsen and Skou
\cite{LS91}. Later on, it was investigated in a great many
probabilistic models.
An adequate logic for 
probabilistic bisimulation in a setting similar to our pLTSs has
been studied in \cite{JWL01,PS07}. It is also based on an probabilistic
extension of the Hennessy-Milner logic. The main difference from our
logic in Section~\ref{s:adequate} is the introduction of the
operator $[\Cdot ]_p$. Intuitively, a distribution $\Delta$
satisfies the formula $[\phi]_p$ when the set of states satisfying
$\phi$ is measured by $\Delta$ with probability at least $p$. So the
formula $[\phi]_p$ can be expressed by our logic in terms of the
probabilistic choice $\bigoplus_{i\in I}p_i\Cdot\phi_i$ by setting
$I=\{1,2\}$, $p_1=p$, $p_2=1-p$, $\phi_1=\phi$, and $\phi_2=true$.
When restricted to
deterministic pLTSs (i.e., for each state and for each action, there
exists at most one outgoing transition from the state),
probabilistic bisimulations can be characterised by simpler forms of
logics, as observed in \cite{LS91,DEP98,PS07}.

An expressive logic for nonprobabilistic 
bisimulation has been proposed in \cite{SI94}.
 In this paper we partially extend the results of
\cite{SI94} to a probabilistic setting that admits both
probabilistic and nondeterministic choice. We present a
probabilistic extension of the modal mu-calculus \cite{Koz83}, where
a formula is interpreted as the set of states satisfying it. This is
in contrast to the probabilistic semantics of the mu-calculus as
studied in
 \cite{HK97,MM97,MM07} where formulae denote lower bounds of
probabilistic evidence of properties, and the semantics of the
generalised probabilistic logic of \cite{CIN05} where a mu-calculus
formula is interpreted as a set of deterministic trees that satisfy
it.

The Kantorovich metric has been used by van Breugel \emph{et al.}
for defining behavioural pseudometrics on fully probabilistic
systems \cite{BW01,BW06,BSW07} and reactive probabilistic systems
\cite{BW01b,BW05,BHMW05,BHMW07}; and by Desharnais {\em et al.} for
labelled Markov chains \cite{DGJP99,DGJP04} and labelled concurrent
Markov chains \cite{DGJP02}; and later on by Ferns \emph{et al.} for
Markov decision processes \cite{FPP04,FPP05}; and by Deng \emph{et
al.} for action-labelled quantitative transition systems
\cite{DCPP05}.  One exception is \cite{DLT08}, which proposes a
pseudometric for labelled Markov chains without using the Kantorovich
metric.  Instead, it is based on a notition of $\epsilon$-bisimulation, which
relaxes the definition of probabilistic bisimulation by allowing small
perturbation of probabilities.
In this paper we are mainly interested in the
correspondence of our lifting operation to the Kantorovich metric.
The metric characterisation of probabilistic bisimulation in
Section~\ref{s:metric} is merely a direct consequence of this
correspondence.

Decision algorithms for probabilistic bisimilarity and similarity
have been considered by Baier et al. in \cite{BEM00} and Zhang et
al. in \cite{ZHEJ08}. Their algorithms are global in the sense that
a whole state space has to be fully generated in advance. In
contrast, ``on the fly" algorithms are local in the sense that the
state space is dynamically generated which is often more efficient
to determine that one state fails to be related to another. Our
algorithm in Section~\ref{s:algo} is inspired by \cite{BEM00}
because we also reduce the problem of checking if two distributions
are related by a lifted relation to the maximum flow problem of a
suitable network. We generalise the local algorithm of checking
nonprobabilistic bisimilarity \cite{FM90,Lin98} to the probabilistic
setting.

This paper provides a relatively comprehensive account of
probabilistic bisimulation. Some of the results or their variants were mentioned previously 
in \cite{DCPP05,DD09,DD09b,DG10}. Here they are presented in a uniform
way and equipped with detailed proofs.

\paragraph{Outline of the paper} The paper proceeds by recalling a
way of lifting binary relations from states to distributions, and
showing its coincidence with a few other ways in
Section~\ref{s:lift}. The lifting operation is justified in
Section~\ref{s:justify} in terms of its correspondence to the
Kantorovich metric and the maximum flow problem. In
Section~\ref{s:pbisi} we define probabilistic bisimulation and show
its infinite approximation. In Section~\ref{s:logic} we introduce a
probabilistic choice modality, then extend the Hennessy-Milner logic
and the modal mu-calculus so to obtain two logics that are adequate
and expressive, respectively. In Section~\ref{s:metric} we
characterise probabilistic bisimulations as pseudometrics. In
Section~\ref{s:algo} we exploit the correspondence of our lifting
operation to the maximum flow problem, and present a polynomial time
decision algorithm. Finally, Section~\ref{s:conclude} concludes the
paper.

\section{Lifting relations}\label{s:lift}
In the probabilistic setting, formal systems are usually modelled as
distributions over states. To compare two systems involves the
comparison of two distributions. So we need a way of lifting
relations on states to relations on distributions. This is used, for
example, to define probabilistic bisimulation as we shall see in
Section~\ref{s:pbisi}. A few approaches of lifting relations have
appeared in the literature. We will take the one from \cite{DGHM09},
and show its coincidence with two other approaches.

We first fix some notation. A (discrete) probability
distribution\index{probability distribution} over a set $S$ is a
mapping \mbox{$\Delta: S \rightarrow [0,1]$} with $\sum_{s\in
S}\Delta(s)=1$. The \emph{support}\index{support} of $\Delta$ is
given by $\support{\Delta} := \setof{s \in S}{\Delta(s) > 0}$. In
this paper we only consider finite state systems, so it suffices to
use distributions with finite support; let $\dist{S}$, ranged over
by $\Delta, \Theta$, denote the collection of all such distributions
over $S$.  We use $\pdist{s}$ to denote the point distribution,
satisfying $\pdist{s}(t)=1$ if $t=s$, and $0$ otherwise. If $p_i\geq
0$ and $\Delta_i$ is a distribution for each $i$ in some finite
index set $I$, then $\sum_{i \in I}p_i \ocdot \Delta_i$ is  given by
\begin{equation*}
  (\sum_{i \in I}p_i \ocdot \Delta_i)(s) ~~~=~~~ \sum_{i \in I} p_i \cdot\Delta_i(s)
\end{equation*}
If $\sum_{i \in I} p_i = 1$ then this is easily seen to be a
distribution in $\dist{S}$.
Finally, the
\emph{product}\index{product} of two probability distributions
$\Delta,\Theta$ over $S,T$ is the distribution $\Delta\times\Theta$
over $S\times T$ defined by
$(\Delta\times\Theta)(s,t):=\Delta(s)\cdot\Theta(t)$.

\begin{definition}\label{d:lifting}
Given two sets $S$ and $T$ and a relation $\mathord{\aRel} \subseteq
S \mathop{\times} T$. Then $\mathord{\lift{\aRel}} \subseteq
\dist{S} \mathop{\times} \dist{T}$ is the smallest relation that
satisfies:
\begin{enumerate}
\item
$s\aRel t$ implies $\pdist{s}\lift{\aRel}\pdist{t}$

\item
$\Delta_i\lift{\aRel}\Theta_i$ implies $(\sum_{i\in
I}p_i\cdot\Delta_i)\lift{\aRel} (\sum_{i\in I}p_i\cdot\Theta_i)$,
where $I$ is a finite index set and  $\sum_{i\in I}{p_i} = 1$.
\end{enumerate}
\end{definition}

The lifting construction satisfies the following  useful property
whose proof is straightforward thus omitted.
\begin{proposition}\rm\label{prop:lifting}
Suppose $\mathord{\aRel}\subseteq S\times S$ and
 $\sum_{i\in I} p_i = 1$. If $(\sum_{i\in I}{p_i \ocdot \Delta_i}) \lift{\aRel} \Theta$ then
$\Theta = \sum_{i\in I}{p_i \ocdot \Theta_i}$  for some set of
distributions $\Theta_i$ such that $\Delta_i \lift{\aRel} \Theta_i$.
\hfill\qed
\end{proposition}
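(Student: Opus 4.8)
The plan is to use the inductive character of Definition~\ref{d:lifting}: since $\lift{\aRel}$ is the \emph{smallest} relation closed under clauses~(1) and~(2), every pair in $\lift{\aRel}$ admits a finite derivation built from those clauses, and I would prove the proposition by induction on such a derivation. Throughout, indices $i$ with $p_i = 0$ contribute nothing to the relevant sums and may be discarded, so I assume every $p_i > 0$ (and likewise every $q_j > 0$ in the step below). If the derivation ends with clause~(1), then $\Delta = \pdist{s}$, $\Theta = \pdist{t}$ and $s\aRel t$; from $\pdist{s} = \sum_{i}p_i\ocdot\Delta_i$ with all $p_i > 0$ one reads off $\Delta_i(s') = 0$ for $s'\neq s$, so each $\Delta_i = \pdist{s}$, and taking $\Theta_i := \pdist{t}$ works since $\pdist{s}\lift{\aRel}\pdist{t}$ and $\sum_i p_i\ocdot\pdist{t} = \pdist{t}$.

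If the derivation ends with clause~(2), then $\Delta = \sum_{j\in J}q_j\ocdot\Gamma_j$ and $\Theta = \sum_{j\in J}q_j\ocdot\Lambda_j$ with $\sum_j q_j = 1$ and each $\Gamma_j\lift{\aRel}\Lambda_j$ derivable by a strictly shorter derivation. The crux is that $\Delta$ now carries two convex decompositions, $\sum_i p_i\ocdot\Delta_i$ and $\sum_j q_j\ocdot\Gamma_j$, and I must interleave them into a common refinement. For $s\in\support{\Delta}$ set $r_{ij} := \sum_{s}\frac{p_i\,\Delta_i(s)\;q_j\,\Gamma_j(s)}{\Delta(s)}$ and, whenever $r_{ij} > 0$, $\Xi_{ij}(s) := \frac{p_i\,\Delta_i(s)\;q_j\,\Gamma_j(s)}{r_{ij}\,\Delta(s)}$. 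A routine calculation (essentially Fubini) gives $\sum_j r_{ij} = p_i$, $\sum_i r_{ij} = q_j$, and shows each $\Xi_{ij}$ to be a distribution with $\sum_j r_{ij}\ocdot\Xi_{ij} = p_i\ocdot\Delta_i$ and $\sum_i r_{ij}\ocdot\Xi_{ij} = q_j\ocdot\Gamma_j$. In particular $\Gamma_j = \sum_i\frac{r_{ij}}{q_j}\ocdot\Xi_{ij}$ is a convex combination, so the induction hypothesis applied to $\Gamma_j\lift{\aRel}\Lambda_j$ yields $\Lambda_{ij}$ with $\Xi_{ij}\lift{\aRel}\Lambda_{ij}$ and $\Lambda_j = \sum_i\frac{r_{ij}}{q_j}\ocdot\Lambda_{ij}$. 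Now put $\Theta_i := \sum_j\frac{r_{ij}}{p_i}\ocdot\Lambda_{ij}$, a convex combination since $\sum_j\frac{r_{ij}}{p_i} = 1$. Clause~(2) applied to the family $\{\Xi_{ij}\lift{\aRel}\Lambda_{ij}\}_j$ gives $\Delta_i = \sum_j\frac{r_{ij}}{p_i}\ocdot\Xi_{ij}\lift{\aRel}\Theta_i$, and swapping the order of summation gives $\sum_i p_i\ocdot\Theta_i = \sum_{i,j}r_{ij}\ocdot\Lambda_{ij} = \sum_j q_j\ocdot\Lambda_j = \Theta$, completing the induction.

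I expect the only genuinely delicate point to be this middle step — reconciling the externally supplied decomposition $\sum_i p_i\ocdot\Delta_i$ of $\Delta$ with the one produced by the last application of clause~(2); the weights $r_{ij}$ above are precisely what makes it go through, and everything else is bookkeeping with convex combinations. An alternative, essentially equivalent route would first prove the folklore description ``$\Delta\lift{\aRel}\Theta$ iff $\Delta = \sum_{k}w_k\ocdot\pdist{s_k}$ and $\Theta = \sum_{k}w_k\ocdot\pdist{t_k}$ for some weights $w_k$ with $\sum_k w_k = 1$ and $s_k\aRel t_k$'' (itself obtained by the same rule induction), from which the proposition follows by splitting each weight $w_k$ among the $\Delta_i$ in proportion to $p_i\,\Delta_i(s_k)/\Delta(s_k)$.
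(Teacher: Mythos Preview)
Your argument is correct. The paper itself omits the proof of this proposition, declaring it ``straightforward,'' so there is no detailed argument to compare against. Your rule induction with the common-refinement weights $r_{ij}$ is a perfectly good direct proof. The alternative you sketch at the end---first establishing that $\Delta\lift{\aRel}\Theta$ iff $\Delta=\sum_k w_k\ocdot\pdist{s_k}$ and $\Theta=\sum_k w_k\ocdot\pdist{t_k}$ with $s_k\aRel t_k$---is exactly Proposition~\ref{p:lifting} of the paper (stated immediately afterwards, also without proof), and from it the present proposition follows in one step by splitting each $w_k$ proportionally to $p_i\Delta_i(s_k)/\Delta(s_k)$; this is almost certainly the ``straightforward'' route the authors had in mind, and it avoids the double induction.
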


We now look at alternative presentations of
Definition~\ref{d:lifting}. The proposition below is immediate.
\begin{proposition}\label{p:lifting}
Let $\Delta$ and $\Theta$ be distributions over $S$ and $T$,
  respectively, and $\aRel\subseteq S\times T$. Then
$\Delta \lift{\aRel} \Theta$ if and only if $\Delta,\Theta$ can be
decomposed as follows:
\begin{enumerate}
\item
$\Delta = \sum_{i\in I}{p_i \cdot \pdist{s_i}}$, where
  $I$ is a finite index set and  $\sum_{i\in I}{p_i} = 1$

\item
For each $i \in I$ there is a state $t_i$ such that $s_i \aRel t_i$

\item
$\Theta = \sum_{i \in I} {p_i \cdot \pdist{t_i}}$. \hfill\qed
\end{enumerate}
\end{proposition}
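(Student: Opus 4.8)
The plan is to prove both directions of the biconditional in Proposition~\ref{p:lifting}, using Definition~\ref{d:lifting} for the forward direction and the closure properties of $\lift{\aRel}$ for the backward direction.

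For the ``if'' direction, suppose $\Delta,\Theta$ admit the stated decomposition $\Delta = \sum_{i\in I}p_i\cdot\pdist{s_i}$, $\Theta = \sum_{i\in I}p_i\cdot\pdist{t_i}$ with $s_i\aRel t_i$ for each $i$. Then by clause~1 of Definition~\ref{d:lifting} we have $\pdist{s_i}\lift{\aRel}\pdist{t_i}$ for each $i\in I$, and since $\sum_{i\in I}p_i = 1$, clause~2 immediately gives $(\sum_{i\in I}p_i\cdot\pdist{s_i})\lift{\aRel}(\sum_{i\in I}p_i\cdot\pdist{t_i})$, i.e. $\Delta\lift{\aRel}\Theta$. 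This direction is essentially trivial.

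For the ``only if'' direction, the natural route is induction on the derivation of $\Delta\lift{\aRel}\Theta$, exploiting that $\lift{\aRel}$ is the \emph{smallest} relation closed under the two clauses. If $\Delta\lift{\aRel}\Theta$ was obtained by clause~1, then $\Delta = \pdist{s}$ and $\Theta = \pdist{t}$ with $s\aRel t$, and we take $I$ a singleton with $p_1 = 1$, $s_1 = s$, $t_1 = t$. If it was obtained by clause~2, then $\Delta = \sum_{j\in J}q_j\cdot\Delta_j$ and $\Theta = \sum_{j\in J}q_j\cdot\Theta_j$ with $\Delta_j\lift{\aRel}\Theta_j$ for each $j\in J$ and $\sum_{j\in J}q_j = 1$; by the induction hypothesis each $\Delta_j = \sum_{i\in I_j}p^j_i\cdot\pdist{s^j_i}$ and $\Theta_j = \sum_{i\in I_j}p^j_i\cdot\pdist{t^j_i}$ with $s^j_i\aRel t^j_i$, so combining indices over the disjoint union $I = \biguplus_{j\in J}I_j$ with weights $q_j\cdot p^j_i$ yields the required decomposition, since the weights sum to $\sum_j q_j\sum_i p^j_i = \sum_j q_j = 1$. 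To make the induction rigorous one should really be inducting on the structure of a finite derivation tree witnessing membership of $(\Delta,\Theta)$ in $\lift{\aRel}$; alternatively one can observe directly that the set of pairs $(\Delta,\Theta)$ admitting such a decomposition is itself closed under clauses~1 and~2 (closure under~1 is immediate; closure under~2 uses the same index-combining argument), hence contains $\lift{\aRel}$.

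The main obstacle, such as it is, is purely bookkeeping: managing the reindexing when several decomposed distributions are combined by a convex sum, and making sure the rational weights $q_j\cdot p^j_i$ are correctly accumulated and still sum to one. There is a minor subtlety in that the same point distribution $\pdist{s}$ may arise from different indices $i$; this causes no problem because the decomposition in the statement does not require the $s_i$ (or $t_i$) to be distinct, so duplicate point masses may simply be listed separately. Hence the proof is routine, which is presumably why the statement is labelled ``immediate''.

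\begin{proof}
($\Leftarrow$) Assume the decomposition. By Definition~\ref{d:lifting}(1), $\pdist{s_i}\lift{\aRel}\pdist{t_i}$ for each $i\in I$. Since $\sum_{i\in I}p_i = 1$, Definition~\ref{d:lifting}(2) yields $\Delta = \sum_{i\in I}p_i\cdot\pdist{s_i} \mathrel{\lift{\aRel}} \sum_{i\in I}p_i\cdot\pdist{t_i} = \Theta$.

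($\Rightarrow$) Let $\mathcal{R}$ be the set of all pairs $(\Delta,\Theta)\in\dist{S}\times\dist{T}$ that can be decomposed as in items 1--3. It suffices to show $\mathcal{R}$ satisfies the two closure conditions of Definition~\ref{d:lifting}, for then $\mathord{\lift{\aRel}}\subseteq\mathcal{R}$ by minimality.

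For condition~1, if $s\aRel t$ then taking $I = \{*\}$, $p_* = 1$, $s_* = s$, $t_* = t$ witnesses $(\pdist{s},\pdist{t})\in\mathcal{R}$.

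For condition~2, suppose $(\Delta_j,\Theta_j)\in\mathcal{R}$ for each $j$ in a finite index set $J$, with $\sum_{j\in J}q_j = 1$. For each $j$ fix a decomposition $\Delta_j = \sum_{i\in I_j}p^j_i\cdot\pdist{s^j_i}$, $\Theta_j = \sum_{i\in I_j}p^j_i\cdot\pdist{t^j_i}$ with $s^j_i\aRel t^j_i$. Let $I = \{(j,i) : j\in J,\ i\in I_j\}$ and for $(j,i)\in I$ put $p_{(j,i)} = q_j\cdot p^j_i$, $s_{(j,i)} = s^j_i$, $t_{(j,i)} = t^j_i$. Then $\sum_{(j,i)\in I}p_{(j,i)} = \sum_{j\in J}q_j\sum_{i\in I_j}p^j_i = \sum_{j\in J}q_j = 1$, and $\sum_{(j,i)\in I}p_{(j,i)}\cdot\pdist{s_{(j,i)}} = \sum_{j\in J}q_j\cdot\Delta_j$, likewise for the $t$-components. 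Since $s_{(j,i)}\aRel t_{(j,i)}$ for all $(j,i)\in I$, this exhibits $(\sum_{j\in J}q_j\cdot\Delta_j,\ \sum_{j\in J}q_j\cdot\Theta_j)\in\mathcal{R}$.

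Hence $\mathcal{R}$ is closed under both conditions, so $\mathord{\lift{\aRel}}\subseteq\mathcal{R}$, which is the desired implication.
\end{proof}
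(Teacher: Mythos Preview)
Your proof is correct. The paper does not actually give a proof of this proposition: it is stated as ``immediate'' and closed with a \qed{} symbol, so there is nothing to compare against beyond the fact that your argument is precisely the routine verification the paper omits. Both directions are handled cleanly; in particular, for the forward direction your choice to show that the set $\mathcal{R}$ of decomposable pairs is closed under the two clauses of Definition~\ref{d:lifting} (rather than inducting on a derivation tree) is the tidier of the two options you sketched and avoids any need to formalise derivations.
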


An important point here is that in the decomposition of $\Delta$
into $\sum_{i\in I}{p_i \ocdot \pdist{s_i}}$, the states $s_i$ are
\emph{not necessarily distinct}: that is, the decomposition is not
in general unique. Thus when establishing the relationship between
$\Delta$ and $\Theta$, a given state $s$ in $\Delta$ may play a
number of different roles.

From Definition~\ref{d:lifting}, the next two properties follows. In
fact, they are sometimes used in the literature as definitions of
lifting relations instead of being properties (see e.g.
\cite{SL94,LS91}).
\begin{theorem}\label{t:lifting.alternative}
\begin{enumerate}
\item Let $\Delta$ and $\Theta$ be distributions over $S$ and $T$,
  respectively. Then
  $\Delta \lift{\aRel} \Theta$ if and only if there
  exists a weight function $w:S\times T \rightarrow [0,1]$ such that
  \begin{enumerate}
  \item $\forall s\in S: \sum_{t\in T}w(s,t) = \Delta(s)$
  \item $\forall t\in T: \sum_{s\in S}w(s,t) = \Theta(t)$
  \item $\forall (s,t)\in S\times T: w(s,t) > 0 \Rightarrow s\aRel t$.
  \end{enumerate}
\item Let $\Delta,\Theta$ be distributions over  $S$ and
 $\aRel$ is an equivalence relation. Then
$\Delta \lift{\aRel} \Theta$ if and only if $\Delta(C)=\Theta(C)$
for all equivalence class $C\in S/{\cal R}$, where $\Delta(C)$
stands for the accumulation probability $\sum_{s\in C}\Delta(s)$.
\end{enumerate}
\end{theorem}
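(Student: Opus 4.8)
The plan is to prove both items by exploiting the characterisation given in Proposition~\ref{p:lifting}, which says that $\Delta\lift{\aRel}\Theta$ exactly when $\Delta$ and $\Theta$ admit matching decompositions into point distributions $\Delta=\sum_{i\in I}p_i\cdot\pdist{s_i}$, $\Theta=\sum_{i\in I}p_i\cdot\pdist{t_i}$ with $s_i\aRel t_i$ for each $i$. The two items are essentially repackagings of this data: item (1) repackages the index set $I$ as a non-negative matrix (the weight function) indexed by $S\times T$, and item (2) specialises to the case where the only information that survives is how much mass each equivalence class carries.

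For item (1), I would argue both directions. For the forward direction, assume $\Delta\lift{\aRel}\Theta$ and take the decomposition from Proposition~\ref{p:lifting}; then define $w(s,t):=\sum\{p_i\mid i\in I,\ s_i=s,\ t_i=t\}$, i.e.\ collect the weights of all indices that route mass from $s$ to $t$. Conditions (a) and (b) are immediate by reindexing the sums ($\sum_{t}w(s,t)$ just re-collects $\sum_{i:s_i=s}p_i=\Delta(s)$, and symmetrically for $\Theta$), and condition (c) holds because $w(s,t)>0$ forces the existence of some $i$ with $s_i=s$, $t_i=t$, hence $s\aRel t$. For the converse, given a weight function $w$, let $I=\{(s,t)\in S\times T\mid w(s,t)>0\}$, set $p_{(s,t)}=w(s,t)$, $s_{(s,t)}=s$, $t_{(s,t)}=t$; conditions (a),(b) give $\Delta=\sum p_i\pdist{s_i}$, $\Theta=\sum p_i\pdist{t_i}$ and $\sum_i p_i=\sum_{s,t}w(s,t)=\sum_s\Delta(s)=1$, while (c) gives $s_i\aRel t_i$, so Proposition~\ref{p:lifting} yields $\Delta\lift{\aRel}\Theta$.

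For item (2), with $\aRel$ an equivalence relation on $S$, I would again use the weight-function characterisation from item (1). If $\Delta\lift{\aRel}\Theta$ via weight function $w$, then for any equivalence class $C$, $\Theta(C)=\sum_{t\in C}\sum_{s\in S}w(s,t)=\sum_{s\in S}\sum_{t\in C}w(s,t)$; but $w(s,t)>0$ implies $s\aRel t$, so for $t\in C$ only $s\in C$ contributes, giving $\Theta(C)=\sum_{s\in C}\sum_{t\in C}w(s,t)=\sum_{s\in C}\Delta(s)=\Delta(C)$. Conversely, assuming $\Delta(C)=\Theta(C)$ for every class $C$, I build a weight function blockwise: on each class $C$ with $\Delta(C)=\Theta(C)=:q_C>0$, set $w(s,t):=\Delta(s)\Theta(t)/q_C$ for $s,t\in C$ and $w(s,t)=0$ otherwise. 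Then $\sum_t w(s,t)=\Delta(s)\cdot(\Theta(C)/q_C)=\Delta(s)$, symmetrically $\sum_s w(s,t)=\Theta(t)$, and $w(s,t)>0$ only when $s,t$ lie in a common class, i.e.\ $s\aRel t$; by item (1), $\Delta\lift{\aRel}\Theta$.

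I do not expect a genuine obstacle here — the proof is essentially bookkeeping — but the step requiring the most care is the forward direction of item (1), specifically checking that the reindexing collapsing a non-unique decomposition into a weight matrix is done cleanly (the $s_i$ need not be distinct, as emphasised after Proposition~\ref{p:lifting}, so one must sum over the fibre $\{i:s_i=s,t_i=t\}$ rather than pick a representative). A minor point worth stating explicitly in item (2) is that classes $C$ with $q_C=0$ contribute nothing and can be ignored, so the division by $q_C$ is always legitimate.
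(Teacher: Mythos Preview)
Your proposal is correct and matches the paper's proof almost exactly: the weight function $w(s,t)=\sum\{p_i\mid s_i=s,\ t_i=t\}$ and the blockwise construction $w(s,t)=\Delta(s)\Theta(t)/\Delta([s]_{\aRel})$ are precisely what the paper uses. The only cosmetic difference is that for item~(2) the paper argues both directions directly via Proposition~\ref{p:lifting} rather than routing through item~(1), but the underlying computations are identical.
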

\begin{proof}
\begin{enumerate}
\item
($\Rightarrow$) Suppose $\Delta\lift{\aRel}\Theta$. By
Proposition~\ref{p:lifting}, we can decompose $\Delta$ and $\Theta$
such that $\Delta=\sum_{i\in I}p_i\cdot\pdist{s_i}$,
$\Theta=\sum_{i\in I}p_i\cdot\pdist{t_i}$, and $s_i\aRel t_i$ for
all $i\in I$. We define the weight function $w$ by letting
$w(s,t)=\sum\sset{p_i\mid s_i=s, t_i=t, i\in I}$ for any $s\in S,
t\in T$. This weight function can be checked to meet our
requirements.
\begin{enumerate}
\item For any $s\in S$, it holds that
\[\begin{array}{rcl}
\sum_{t\in T}w(s,t) &= & \sum_{t\in T}\sum\sset{p_i\mid s_i=s,
t_i=t, i\in I}\\
& = &\sum\sset{p_i\mid s_i=s, i\in I}\\
& = & \Delta(s)
\end{array}\]
\item Similarly, we have $\sum_{s\in S}w(s,t)=\Theta(t)$.
\item For any $s\in S, t\in T$, if $w(s,t)>0$ then there is some $i\in
I$ such that $p_i>0$, $s_i=s$, and $t_i=t$. It follows from
$s_i\aRel t_i$ that $s\aRel t$.
\end{enumerate}

($\Leftarrow$) Suppose there is a weight function $w$ satisfying the
three conditions in the hypothesis. We construct the index set
$I=\sset{(s,t)\mid w(s,t)>0, s\in S,t\in T}$ and probabilities
$p_{(s,t)}=w(s,t)$ for each $(s,t)\in I$.
\begin{enumerate}
\item It holds that $\Delta=\sum_{(s,t)\in I}p_{(s,t)}\cdot\pdist{s}$
because, for any $s\in S$,
\[\begin{array}{rcl}
(\sum_{(s,t)\in I}p_{(s,t)}\cdot\pdist{s})(s) & = & \sum_{(s,t)\in
I}w(s,t)\\
& = & \sum\sset{w(s,t)\mid w(s,t)>0, t\in T}\\
& = & \sum\sset{w(s,t)\mid t\in T}\\
& = & \Delta(s)
\end{array}\]
\item Similarly, we have $\Theta=\sum_{(s,t)\in
I}w(s,t)\cdot\pdist{t}$.
\item For each $(s,t)\in I$, we have $w(s,t)>0$, which implies $s\aRel
t$.
\end{enumerate}
Hence, the above decompositions of $\Delta$ and $\Theta$ meet the
requirement of the lifting $\Delta\lift{\aRel}\Theta$.

\item
($\Rightarrow$) Suppose $\Delta\lift{\aRel}\Theta$. By
Proposition~\ref{p:lifting}, we can decompose $\Delta$ and $\Theta$
such that $\Delta=\sum_{i\in I}p_i\cdot\pdist{s_i}$,
$\Theta=\sum_{i\in I}p_i\cdot\pdist{t_i}$, and $s_i\aRel t_i$ for
all $i\in I$. For any equivalence class $C\in S/{\cal R}$, we have
that
\[\begin{array}{rcl}
\Delta(C) = \sum_{s\in C}\Delta(s) & = & \sum_{s\in
C}\sum\sset{p_i\mid i\in I, s_i=s}\\
& = & \sum\sset{p_i\mid i\in I, s_i\in C}\\
& = & \sum\sset{p_i\mid i\in I, t_i\in C}\\
& = & \Theta(C)
\end{array}\]
where the equality in the third line is justified by the fact that
$s_i\in C$ iff $t_i\in C$ since $s_i\aRel t_i$ and $C\in S/{\cal
R}$.

($\Leftarrow$) Suppose, for each equivalence class $C\in S/{\cal
R}$, it holds that $\Delta(C)=\Theta(C)$. We construct the index set
$I=\sset{(s,t)\mid s\aRel t \mbox{ and } s,t\in S}$ and
probabilities
$p_{(s,t)}=\frac{\Delta(s)\Theta(t)}{\Delta([s]_{\aRel})}$ for each
$(s,t)\in I$, where $[s]_{\aRel}$ stands for the equivalence class
that contains $s$.
\begin{enumerate}
\item It holds that $\Delta=\sum_{(s,t)\in I}p_{(s,t)}\cdot\pdist{s}$
because, for any $s'\in S$,
\[\begin{array}{rcl}
(\sum_{(s,t)\in I}p_{(s,t)}\cdot\pdist{s})(s') & = & \sum_{(s',t)\in
I}p_{(s',t)}\\
& = & \sum\sset{\frac{\Delta(s')\Theta(t)}{\Delta([s']_{\aRel})}\mid s'\aRel t,\ t\in S}\\
& = & \sum\sset{\frac{\Delta(s')\Theta(t)}{\Delta([s']_{\aRel})}\mid t\in [s']_{\aRel}}\\
& = & \frac{\Delta(s')}{\Delta([s']_{\aRel})}\sum\sset{\Theta(t)\mid
t\in [s']_{\aRel}}\\
& = & \frac{\Delta(s')}{\Delta([s']_{\aRel})}\Theta([s']_{\aRel})\\
& = & \frac{\Delta(s')}{\Delta([s']_{\aRel})}\Delta([s']_{\aRel})\\
& = & \Delta(s')
\end{array}\]
\item Similarly, we have $\Theta=\sum_{(s,t)\in
I}p_{(s,t)}\cdot\pdist{t}$.
\item For each $(s,t)\in I$, we have $s\aRel
t$.
\end{enumerate}
Hence, the above decompositions of $\Delta$ and $\Theta$ meet the
requirement of the lifting $\Delta\lift{\aRel}\Theta$.
\end{enumerate}
\end{proof}

\section{Justifying the lifting operation}\label{s:justify}
In our opinion, the lifting operation given in
Definition~\ref{d:lifting} is not only concise but also on the right
track. This is justified by its intrinsic connection with some
fundamental concepts in mathematics, notably the Kantorovich metric.
\subsection{Justification by the Kantorovich metric}\label{s:kan}

We begin with some historical notes. The \emph{transportation
problem} has been playing an important role in linear programming
due to its general formulation and methods of
solution. The original transportation problem, formulated by the French mathematician G. Monge in 1781 \cite{Mon1781}, 
consists of finding an optimal way of shovelling a pile of sand into
a hole of the same volume. 
In the 1940s, the Russian mathematician and economist L.V.
Kantorovich, who was awarded a Nobel prize in economics in 1975 for
the theory of optimal allocation of resources, gave a relaxed
formulation of the problem and proposed a variational principle for
solving the problem \cite{Kan42}. Unfortunately, Kantorovich's work
went unrecognized during a long period of time. The later known
\emph{Kantorovich metric} has appeared in the literature under
different names, because it has been rediscovered historically
several times from different perspectives. Many metrics known in
measure theory, ergodic theory, functional analysis, statistics,
etc. are special cases of the general definition of the Kantorovich
metric \cite{Ver06}. The elegance of the formulation, the
fundamental character of the optimality criterion, as well as the
wealth of applications, which keep arising, place the Kantorovich
metric in a prominent position among the mathematical works of the
20th century. In addition, this formulation can be computed in
polynomial time \cite{Orl88}, which is an appealing feature for its
use in solving applied problems. For example, it is widely used to
solve a variety of problems in business and economy such as market
distribution, plant location, scheduling problems etc. In recent
years the metric attracted the attention of computer scientists
\cite{DD09}: it has been used in various different areas in computer
science such as probabilistic concurrency, image retrieval, data
mining, bioinformatics, etc.

Roughly speaking,  the Kantorovich metric provides a way of
measuring the distance between two distributions. Of course, this
requires first a notion of distance between the basic elements that
are aggregated into the distributions, which is often referred to as
the \emph{ground distance}. In other words, the Kantorovich metric
defines a ``lifted" distance between two distributions of mass in a
space that is itself endowed with a ground distance. There are a
host of metrics available in the literature (see e.g. \cite{GS02})
to quantify the distance between probability measures; see
\cite{Rac91} for a comprehensive review of metrics in the space of
probability measures. The Kantorovich metric has an elegant
formulation and a natural interpretation in terms of the
transportation problem.

We now recall the mathematical definition of the Kantorovich metric.
Let $(X,m)$ be a separable metric space. (This condition will be
used by Theorem~\ref{t:KRduality} below.)
\begin{definition}\label{d:K} Given any two Borel probability measures $\Delta$ and $\Theta$ on $X$, the \emph{Kantorovich distance} between
$\Delta$ and $\Theta$ is defined by
\[K(\Delta,\Theta)={\rm sup}\left\{\left|\int f d\Delta - \int f d\Theta\right|: ||f||\leq 1\right\}.\]
where $||\cdot||$ is the \emph{Lipschitz semi-norm} defined by
$||f||={\rm sup}_{x\not=y}\frac{|f(x)-f(y)|}{m(x,y)}$ for a function
$f:X\rightarrow \Real$ with $\Real$ being the set of all real
numbers.
\end{definition}

The Kantorovich metric has an alternative characterisation. We
denote by $\BP(X)$ the set of all Borel probability measures on $X$
such that for all $z\in X$, if $\Delta\in\BP(X)$ then $\int_X
m(x,z)\Delta(x)<\infty$. We write $M(\Delta,\Theta)$ for the set of
all Borel probability measures on the product space $X\times X$ with
marginal measures $\Delta$ and $\Theta$, i.e. if $\Gamma\in
M(\Delta,\Theta)$ then $\int_{y\in X} d\Gamma(x,y)=d\Delta(x)$ and
$\int_{x\in X} d\Gamma(x,y)=d\Theta(y)$ hold.
\begin{definition}\label{d:L}
For $\Delta,\Theta\in\BP(X)$, we define the metric $L$ as follows:
\[L(\Delta,\Theta)={\rm inf}\left\{\int m(x,y)d \Gamma(x,y): \Gamma\in M(\Delta,\Theta)\right\}.\]
\end{definition}

\begin{lemma}
If $(X,m)$ is a separable metric space then $K$ and $L$ are metrics
on $\BP(X)$. \hfill\qed
\end{lemma}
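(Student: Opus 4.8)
The plan is to verify the four metric axioms for $K$ and for $L$ in turn, the only steps requiring real thought being the identity of indiscernibles (for both) and the triangle inequality for $L$. I would first record that $K$ is real-valued on $\BP(X)\times\BP(X)$: if $\|f\|\le 1$ then $|f(x)-f(y)|\le m(x,y)$, so for any fixed basepoint $z$ one has $\bigl|\int f\,d\Delta-\int f\,d\Theta\bigr|\le\int m(x,z)\,d\Delta(x)+\int m(y,z)\,d\Theta(y)<\infty$ by the defining property of $\BP(X)$; in particular every admissible $f$ is $\Delta$- and $\Theta$-integrable, so the supremum defining $K$ is over a bounded non-empty set. Non-negativity is clear ($f\equiv 0$ is admissible); symmetry holds because $f\mapsto -f$ and exchanging $\Delta,\Theta$ leave the quantity inside the supremum unchanged; and the triangle inequality follows by inserting a third measure inside the absolute value, $|\int f\,d\Delta-\int f\,d\Lambda|\le|\int f\,d\Delta-\int f\,d\Theta|+|\int f\,d\Theta-\int f\,d\Lambda|$, and taking suprema over $\|f\|\le 1$.

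For the identity of indiscernibles for $K$, the direction $K(\Delta,\Delta)=0$ is trivial, and conversely $K(\Delta,\Theta)=0$ forces $\int f\,d\Delta=\int f\,d\Theta$ for all $f$ with $\|f\|\le 1$, hence — after rescaling — for every Lipschitz $f:X\to\Real$ of finite Lipschitz seminorm. It then suffices to show such functions are measure-determining. I would do this by fixing a non-empty closed set $C\subseteq X$ and the bounded Lipschitz functions $f_n(x)=\max\bigl(0,\,1-n\inf_{c\in C}m(x,c)\bigr)$, which decrease pointwise to $\mathbf 1_C$ because $C$ is closed; dominated convergence then gives $\Delta(C)=\lim_n\int f_n\,d\Delta=\lim_n\int f_n\,d\Theta=\Theta(C)$. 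Since the closed subsets of $X$ form a $\pi$-system generating $\mathcal B(X)$ and contain $X$ itself, Dynkin's $\pi$–$\lambda$ theorem yields $\Delta=\Theta$.

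Turning to $L$: the constraint set $M(\Delta,\Theta)$ is non-empty since it contains the product measure $\Delta\times\Theta$, whose cost $\iint m(x,y)\,d\Delta(x)\,d\Theta(y)$ is finite by the same basepoint estimate as above, so $L$ is a well-defined non-negative real. Symmetry follows from the symmetry of $m$ together with the fact that the coordinate-swap push-forward is a cost-preserving bijection $M(\Delta,\Theta)\to M(\Theta,\Delta)$, and $L(\Delta,\Delta)=0$ is witnessed by the diagonal coupling $(\mathrm{id},\mathrm{id})_*\Delta$ of cost $\int m(x,x)\,d\Delta=0$. For the non-trivial direction of the identity of indiscernibles, if $L(\Delta,\Theta)=0$ choose $\Gamma_n\in M(\Delta,\Theta)$ with $\int m\,d\Gamma_n\to 0$; then for every $f$ with $\|f\|\le 1$ we get $\bigl|\int f\,d\Delta-\int f\,d\Theta\bigr|=\bigl|\int (f(x)-f(y))\,d\Gamma_n\bigr|\le\int m\,d\Gamma_n\to 0$, so $\Delta=\Theta$ exactly as in the previous paragraph (this is nothing but the elementary inequality $K\le L$).

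The one genuinely delicate point — and the step I expect to be the main obstacle — is the triangle inequality $L(\Delta,\Lambda)\le L(\Delta,\Theta)+L(\Theta,\Lambda)$, which I would prove via the gluing lemma. Given $\Gamma_1\in M(\Delta,\Theta)$ and $\Gamma_2\in M(\Theta,\Lambda)$, disintegrate $\Gamma_1$ and $\Gamma_2$ along their common marginal $\Theta$ — this is where separability of $X$ enters, guaranteeing the existence of regular conditional probabilities — and glue the pieces into a Borel probability measure $\Pi$ on $X\times X\times X$ whose projection to the first two coordinates is $\Gamma_1$ and whose projection to the last two is $\Gamma_2$. Its projection $\Gamma:=(\pi_{13})_*\Pi$ to the first and third coordinates then belongs to $M(\Delta,\Lambda)$, and
\[\int m(x,z)\,d\Gamma=\int m(x,z)\,d\Pi(x,y,z)\le\int\bigl(m(x,y)+m(y,z)\bigr)\,d\Pi=\int m\,d\Gamma_1+\int m\,d\Gamma_2 .\]
Taking the infimum over $\Gamma_1$ and $\Gamma_2$ gives the triangle inequality. (Alternatively, once Theorem~\ref{t:KRduality} is in hand, the triangle inequality for $L$ is immediate from $K=L$ and the already established triangle inequality for $K$; but the gluing argument has the advantage of being self-contained.) Apart from this disintegration step, everything above is routine manipulation of the inequality $|f(x)-f(y)|\le m(x,y)$ and of push-forwards of couplings.
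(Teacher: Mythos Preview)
The paper does not prove this lemma at all: it is stated with a terminal \qed and treated as a standard background fact (the authors' interest lies in the discrete setting anyway, where all of this collapses to finite linear algebra). So there is no argument in the paper to compare against; what you have written is a complete proof where the paper simply cites the result.

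Your argument is the standard one and is essentially correct. One small point of caution: you say separability of $X$ is what guarantees the existence of regular conditional probabilities in the gluing step. Disintegration theorems are usually stated for Polish (complete separable metric) spaces or, more generally, standard Borel spaces; a merely separable metric space need not be standard Borel, so separability alone is not quite the right hypothesis to invoke there. You already anticipated this, though, by noting the alternative route: once $K=L$ is available from Theorem~\ref{t:KRduality}, the triangle inequality for $L$ follows immediately from the one for $K$, which you proved directly. That alternative is clean and avoids the measure-theoretic subtlety entirely, so if you want a self-contained argument under the hypothesis exactly as stated, that is the safer path.
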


The famous Kantorovich-Rubinstein duality theorem gives a dual
representation of $K$ in terms of  $L$.
\begin{theorem}
[Kantorovich-Rubinstein \cite{KR58}]\label{t:KRduality} If $(X,m)$
is a separable metric space then for any two distributions
$\Delta,\Theta\in\BP(X)$ we have
$K(\Delta,\Theta)=L(\Delta,\Theta)$. \hfill\qed
\end{theorem}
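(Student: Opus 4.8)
The plan is to prove the two inequalities $K(\Delta,\Theta)\le L(\Delta,\Theta)$ and $L(\Delta,\Theta)\le K(\Delta,\Theta)$ separately; the first is elementary, and the second carries all the weight. I would present the argument in three stages: the easy direction, the reverse direction for finitely supported measures via linear programming duality, and the general case via weak approximation using separability.

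For $K\le L$ I would argue directly from the definitions. Fix any $f:X\to\Real$ with $\|f\|\le 1$ and any $\Gamma\in M(\Delta,\Theta)$. Since $\Gamma$ has marginals $\Delta$ and $\Theta$, one may rewrite $\int f\,d\Delta-\int f\,d\Theta=\int_{X\times X}(f(x)-f(y))\,d\Gamma(x,y)$; because $|f(x)-f(y)|\le m(x,y)$, this quantity is bounded in absolute value by $\int_{X\times X}m(x,y)\,d\Gamma(x,y)$. Taking the supremum over $f$ on the left and the infimum over $\Gamma$ on the right yields $K(\Delta,\Theta)\le L(\Delta,\Theta)$.

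For the reverse inequality I would first treat finitely supported $\Delta,\Theta$, for which computing $L(\Delta,\Theta)$ is precisely the classical finite transportation problem: minimise $\sum_{x,y}m(x,y)\,\gamma(x,y)$ over $\gamma\ge 0$ subject to the marginal constraints $\sum_y\gamma(x,y)=\Delta(x)$ and $\sum_x\gamma(x,y)=\Theta(y)$. This is a finite linear program, so by LP duality its optimal value equals $\max\{\sum_x u(x)\Delta(x)-\sum_y v(y)\Theta(y):u(x)-v(y)\le m(x,y)\text{ for all }x,y\}$. The key step is a \emph{$c$-transform} (tightening) argument: given a dual-feasible pair $(u,v)$, replacing $v(y)$ by $\max_x(u(x)-m(x,y))$ and then $u(x)$ by $\min_y(v(y)+m(x,y))$ never decreases the objective, and the triangle inequality for $m$ forces the resulting $u$ to be $1$-Lipschitz and $v$ to coincide with $u$. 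Hence the dual value collapses to $\sup\{\int u\,d\Delta-\int u\,d\Theta:\|u\|\le 1\}$, which is exactly $K(\Delta,\Theta)$ (the absolute value in the definition of $K$ being absorbed by also allowing $-u$). Then for general $\Delta,\Theta\in\BP(X)$ I would pass to the limit: here separability of $X$ enters, since it lets us fix a countable dense subset and approximate $\Delta$ and $\Theta$ weakly by finitely supported measures $\Delta_n,\Theta_n$ with convergence of first moments (available because members of $\BP(X)$ have finite first moment with respect to $m$). One checks that both $K$ and $L$ are suitably continuous along this approximation --- for $L$ this uses tightness of the set of couplings, so that a weak limit of optimal couplings of $(\Delta_n,\Theta_n)$ is a coupling of $(\Delta,\Theta)$, together with lower semicontinuity of $\Gamma\mapsto\int m\,d\Gamma$; for $K$ it is immediate once first moments converge. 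Combining the limit with the finitely supported case gives $K(\Delta,\Theta)=L(\Delta,\Theta)$.

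I expect the reverse inequality to be the main obstacle, and within it two points deserve the most care: (i) the absence of a duality gap, which is trivial for the finite linear program but in the general case needs either the approximation argument above or an infinite-dimensional minimax / Fenchel--Rockafellar argument --- this is exactly where separability is used to ensure the relevant compactness and tightness; and (ii) the $c$-transform step, i.e.\ showing that an optimal dual pair can be taken of the form $(u,u)$ with $u$ being $1$-Lipschitz, so that the dual problem is \emph{literally} the Kantorovich functional $K$ rather than a formally larger optimisation over pairs of potentials. Since this is the classical Kantorovich--Rubinstein theorem, one may of course instead simply invoke \cite{KR58}.
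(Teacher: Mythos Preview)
Your proposal is correct as a proof sketch, but note that the paper does not actually prove this theorem at all: it is stated as a classical result, attributed to \cite{KR58}, and closed with a bare \qed. So there is no ``paper's own proof'' to compare against; you have supplied far more than the paper does.

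As for the content of your sketch: the easy direction $K\le L$ is exactly right. For the finitely supported case your LP duality plus $c$-transform argument is the standard one and is correct; the observation that the triangle inequality for $m$ forces the tightened potential to be $1$-Lipschitz (so the dual collapses to a single Lipschitz function) is indeed the crux. For the general case your approximation strategy via separability is the right idea and is essentially how the classical proof proceeds; the points you flag as delicate (tightness of the family of couplings so that an optimal limiting coupling exists, and lower semicontinuity of $\Gamma\mapsto\int m\,d\Gamma$ under weak convergence) are genuine technicalities that would need to be spelled out in a full proof, but you have correctly identified them. Since the paper treats this as a cited black box, your final remark that one may simply invoke \cite{KR58} is in fact precisely what the paper does.
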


In view of the above theorem, many papers in the literature directly
take Definition~\ref{d:L} as the definition of the Kantorovich
metric. Here we keep the original definition, but it is helpful to
understand $K$ by using $L$. Intuitively, a probability measure
$\Gamma\in M(\Delta,\Theta)$ can be understood as a
\emph{transportation} from one unit mass distribution $\Delta$ to
another unit mass distribution $\Theta$. If the distance $m(x,y)$
represents the cost of moving one unit of mass from location $x$ to
location $y$ then the Kantorovich distance gives the optimal total
cost of transporting the mass of $\Delta$ to $\Theta$. We refer the
reader to \cite{Vil03} for an excellent exposition on the
Kantorovich metric and the duality theorem.

Many problems in computer science only involve finite state spaces,
so discrete distributions with finite supports are sometimes more
interesting than continuous distributions. For two discrete
distributions $\Delta$ and $\Theta$ with finite supports
$\{x_1,...,x_n\}$ and $\{y_1,...,y_l\}$, respectively, minimizing
the total cost of a discretised version of the transportation
problem reduces to the following linear programming problem:
\begin{equation}\label{e:lpr}
\begin{array}{ll}
\mbox{minimize} & \sum_{i=1}^n \sum_{j=1}^l \Gamma(x_i,y_j)m(x_i,y_j)\\
\mbox{subject to}\hspace{4mm} 
 &     \bullet\  \forall 1\leq i\leq n: \sum_{j=1}^l \Gamma(x_i,y_j)=\Delta(x_i)\\
 &     \bullet\  \forall 1\leq j\leq l: \sum_{i=1}^n \Gamma(x_i,y_j)=\Theta(y_j)\\
 &     \bullet\  \forall 1\leq i\leq n,1\leq j\leq l: \Gamma(x_i,y_j)\geq 0.
\end{array}
\end{equation}

Since (\ref{e:lpr}) is a special case of the discrete mass
transportation problem, some well-known polynomial time algorithm
like \cite{Orl88} can be employed to solve it, which is an
attractive feature for computer scientists.

\bigskip
Recall that a pseudometric
  is a function that yields a non-negative real number for each pair
  of elements and satisfies the following: $m(s,s)=0$,
  $m(s,t)=m(t,s)$, and $m(s,t)\leq m(s,u)+m(u,t)$, for any $s,t\in S$. We say a
  pseudometric $m$ is $1$-bounded if $m(s,t)\leq 1$ for any $s$ and
  $t$.
Let $\Delta$ and $\Theta$ be distributions over a finite set $S$ of
states. In \cite{BW01} a $1$-bounded pseudometric $m$ on $S$ is
lifted to be a $1$-bounded pseudometric $\hat{m}$ on $\dist{S}$ by
setting the distance $\hat{m}(\Delta,\Theta)$ to be the value of the
following linear programming problem:
\begin{equation}\label{e:pc}
\begin{array}{ll}
\mbox{maximize} & \sum_{s\in S} (\Delta(s)-\Theta(s))x_{s}\\
\mbox{subject to}\hspace{4mm} 
 &     \bullet\  \forall s,t\in S: x_s-x_t\leq m(s,t)\\
 &     \bullet\  \forall s\in S: 0\leq x_s\leq 1.
\end{array}
\end{equation}
This problem can be dualised and then simplified to yield the
following problem:
\begin{equation}\label{e:pc1}
\begin{array}{ll}
\mbox{minimize} & \sum_{s,t\in S}y_{st} m(s,t) \\
\mbox{subject to}\hspace{4mm} 
 &     \bullet\  \forall s\in S: \sum_{t\in S} y_{st}=\Delta(s)\\
 &     \bullet\  \forall t\in S: \sum_{s\in S}y_{st}= \Theta(t)\\
 &     \bullet\  \forall s,t\in S: y_{st}\geq 0.
\end{array}
\end{equation}
Now (\ref{e:pc1}) is in exactly the same form as (\ref{e:lpr}).

This way of lifting pseudometrics via the Kantorovich metric as
given in (\ref{e:pc1}) has an interesting connection with the
lifting of binary relations given in Definition~\ref{d:lifting}.

\begin{theorem}\label{t:metric.relation}
Let $R$ be a binary relation and $m$ a pseudometric on a state space
$S$ satisfying \begin{equation}\label{e:hy} s\ R\ t
\quad\mbox{iff}\quad m(s,t)=0 \end{equation} for any $s,t\in S$.
Then it holds that \[\Delta\ \lift{R}\ \Theta \quad\mbox{ iff }\quad
\hat{m}(\Delta,\Theta)=0\] for any distributions
$\Delta,\Theta\in\dist{S}$.
\end{theorem}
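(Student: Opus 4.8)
The plan is to prove both directions by exploiting the characterisation of $\lift{R}$ via weight functions (Theorem~\ref{t:lifting.alternative}(1)) and the transportation-problem characterisation of $\hat m$ via the linear program (\ref{e:pc1}). The key observation linking them is that, under hypothesis (\ref{e:hy}), a weight function $w$ witnessing $\Delta\lift{R}\Theta$ is exactly a feasible point $y_{st}=w(s,t)$ of (\ref{e:pc1}) that is supported on pairs with $m(s,t)=0$, and such a point has objective value $\sum_{s,t}y_{st}m(s,t)=0$.

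For the direction $\Delta\lift{R}\Theta \Rightarrow \hat m(\Delta,\Theta)=0$: assume $\Delta\lift{R}\Theta$. By Theorem~\ref{t:lifting.alternative}(1) there is a weight function $w:S\times S\to[0,1]$ with $\sum_t w(s,t)=\Delta(s)$, $\sum_s w(s,t)=\Theta(t)$, and $w(s,t)>0\Rightarrow sRt$, hence $m(s,t)=0$ by (\ref{e:hy}). Setting $y_{st}:=w(s,t)$ gives a feasible solution of (\ref{e:pc1}) whose cost is $\sum_{s,t}y_{st}m(s,t)=0$, since every term with $y_{st}>0$ has $m(s,t)=0$. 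As the objective of (\ref{e:pc1}) is a sum of non-negative terms, its optimal value — which is $\hat m(\Delta,\Theta)$ — is $0$. One should also note that $\hat m(\Delta,\Theta)\ge 0$ because $\hat m$ is a pseudometric, so in fact $\hat m(\Delta,\Theta)=0$.

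For the converse $\hat m(\Delta,\Theta)=0 \Rightarrow \Delta\lift{R}\Theta$: if the optimal value of (\ref{e:pc1}) is $0$, pick an optimal solution $(y_{st})$ with $\sum_{s,t}y_{st}m(s,t)=0$. Since all $y_{st}\ge 0$ and all $m(s,t)\ge 0$, every pair with $y_{st}>0$ must satisfy $m(s,t)=0$, hence $sRt$ by (\ref{e:hy}). Define $w(s,t):=y_{st}$; the marginal constraints of (\ref{e:pc1}) are precisely conditions (a) and (b) of Theorem~\ref{t:lifting.alternative}(1), and the support condition just established is condition (c). Therefore $w$ is a weight function witnessing $\Delta\lift{R}\Theta$, and we conclude by Theorem~\ref{t:lifting.alternative}(1).

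The main point requiring care — rather than a genuine obstacle — is making sure that (\ref{e:pc1}) really does compute $\hat m(\Delta,\Theta)$ as defined (i.e., that the stated dualisation/simplification of (\ref{e:pc}) is legitimate, so that optimal value of (\ref{e:pc1}) $=$ optimal value of (\ref{e:pc}) $=\hat m(\Delta,\Theta)$); this is standard LP duality plus the observation that the $x_s\le 1$ bound is slack because $m$ is $1$-bounded, and it is already asserted in the text preceding the theorem, so it may simply be cited. A second small point is that (\ref{e:pc1}) always has at least one feasible solution (e.g. $y_{st}=\Delta(s)\Theta(t)$), so the infimum is attained and the argument above about an optimal $(y_{st})$ is valid. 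Everything else is routine bookkeeping.
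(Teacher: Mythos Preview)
Your proof is correct and follows essentially the same approach as the paper: both directions go via the weight-function characterisation of $\lift{R}$ (Theorem~\ref{t:lifting.alternative}(1)) and the transportation LP (\ref{e:pc1}), observing that a weight function for $\lift{R}$ is precisely a zero-cost feasible point of (\ref{e:pc1}) under hypothesis (\ref{e:hy}). Your write-up is in fact slightly more explicit on the converse direction than the paper, which simply states that ``the above reasoning can be reversed''.
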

\begin{proof}
Suppose $\Delta\ \lift{R}\ \Theta$. From
Theorem~\ref{t:lifting.alternative}(1) we know there is a weight
function $w$ such that
  \begin{enumerate}
  \item $\forall s\in S: \sum_{t\in S}w(s,t) = \Delta(s)$
  \item $\forall t\in S: \sum_{s\in S}w(s,t) = \Theta(t)$
  \item $\forall s,t\in S: w(s,t) > 0 \Rightarrow s\ R\ t$.
  \end{enumerate}
By substituting $w(s,t)$ for $y_{s,t}$ in (\ref{e:pc1}), the three
constraints there can be satisfied. For any $s,t\in S$ we
distinguish two cases:
\begin{enumerate}
\item either $w(s,t)=0$
\item or $w(s,t)>0$. In this case we have $s\ R\ t$, which implies
$m(s,t)=0$ by (\ref{e:hy}).
\end{enumerate}
Therefore, we always have $w(s,t)m(s,t)=0$ for any $s,t\in S$.
Consequently, $\sum_{s,t\in S}w(s,t)m(s,t)=0$ and the optimal value
of the problem in (\ref{e:pc1}) must be $0$, i.e.
$\hat{m}(\Delta,\Theta)=0$, and the optimal solution is determined
by $w$.

The above reasoning can be reversed to show that the optimal
solution of (\ref{e:pc1}) determines a weight function, thus
$\hat{m}(\Delta,\Theta)=0$ implies $\Delta\ \lift{R}\ \Theta$.
\end{proof}

The above property will be used in Section~\ref{s:metric} to give a
metric characterisation of probabilistic bisimulation (cf.
Theorem~\ref{t:bm}).

\subsection{Justification by network flow}
The lifting operation discussed in Section~\ref{s:lift} is also
related to the maximum flow problem in optimisation theory. This was
already observed by Baier et al. in \cite{BEM00}.

 We briefly recall the basic definitions of
networks. More details can be found in e.g. \cite{Eve79}. A
\emph{network} is a tuple $\CN=(N,E,\bot,\top,c)$ where $(N,E)$ is a
finite directed graph (i.e. $N$ is a set of nodes and $E\subseteq
N\times N$ is a set of edges) with two special nodes $\bot$ (the
\emph{source}) and $\top$ (the \emph{sink}) and a \emph{capability}
$c$, i.e. a function that assigns to each edge $(v,w)\in E$ a
non-negative number $c(v,w)$. A \emph{flow function} $f$ for $\CN$
is a function that assigns to edge $e$ a real number $f(e)$ such
that
\begin{itemize}
\item $0\leq f(e)\leq c(e)$ for all edges $e$.
\item Let $\IN(v)$ be the set of incoming edges to node $v$ and
  $\OUT(v)$ the set of outgoing edges from node $v$. Then, for each
  node $v\in N\backslash\sset{\bot,\top}$,
  \[\sum_{e\in\IN(v)}f(e) ~= \sum_{e\in\OUT(v)}f(e).\]
\end{itemize}
The \emph{flow} $F(f)$ of $f$ is given by
\[F(f) ~= \sum_{e\in\OUT(\bot)}f(e) - \sum_{e\in\IN(\bot)}f(e).\]
The \emph{maximum flow} in $\CN$ is the supremum (maximum) over the
flows $F(f)$, where $f$ is a flow function in $\CN$.

 We will see that
the question whether $\Delta\lift{\aRel}\Theta$ can be reduced to a
maximum flow problem in a suitably chosen network. Suppose
$\aRel\subseteq S\times S$ and $\Delta,\Theta\in\dist{S}$. Let
$S'=\sset{s'\mid s\in
  S}$ where $s'$ are pairwise distinct new states, i.e. $s'\in S$ for
all $s\in S$. We create two states $\bot$ and $\top$ not contained
in $S\cup S'$ with $\bot\not=\top$. We associate with the pair
$(\Delta,\Theta)$
 the following network $\CN(\Delta,\Theta,\aRel)$. 
\begin{itemize}
\item The nodes are $N=S\cup S'\cup\sset{\bot,\top}$.
\item The edges are $E=\sset{(s,t')\mid
    (s,t)\in\aRel}\cup\sset{(\bot,s)\mid s\in S}\cup\sset{(s',\top)\mid
    s\in S}$.
\item The capability $c$ is defined by $c(\bot,s)=\Delta(s)$,
  $c(t',\top)=\Theta(t)$ and $c(s,t')=1$ for all $s,t\in S$.
\end{itemize}

\temporaryremove{ 
\begin{figure}
\psset{unit=0.8cm}
\begin{center}
\rput(-5.5,-1){\rnode{source}{\qdisk(0,0){2pt}}}
\rput(5.5,-1){\rnode{sink}{\qdisk(0,0){2pt}}}
\uput[l](-5.5,-1){$\bot$} \uput[r](5.5,-1){$\top$}
\rput(-2.5,1){\rnode{s1}{\qdisk(0,0){2pt}}} \uput[l](-2.5,1){$s_1$}
\rput(-2.5,-0.5){\rnode{s2}{\qdisk(0,0){2pt}}}
\uput[l](-2.5,-0.25){$s_2$} \rput(-2.5,-2){$\vdots$}
\rput(-2.5,-3.5){\rnode{sn}{\qdisk(0,0){2pt}}}
\uput[l](-2.5,-3.5){$s_n$}
\rput(2.5,1){\rnode{t1}{\qdisk(0,0){2pt}}} \uput[r](2.5,1){$s'_1$}
\rput(2.5,-0.5){\rnode{t2}{\qdisk(0,0){2pt}}}
\uput[r](2.5,-0.25){$s'_2$} \rput(2.5,-2){$\vdots$}
\uput[r](2.5,-3.5){$s'_n$}
\rput(2.5,-3.5){\rnode{tn}{\qdisk(0,0){2pt}}}
\ncline[nodesep=2mm]{->}{s1}{t1}\aput{:U}{$c_{11}$}
\ncline[nodesep=2mm]{->}{s1}{t2}\aput{:U}{$c_{12}$}
\ncline[nodesep=2mm]{->}{s1}{tn}\aput{:U}{$c_{1n}$}
\ncline[nodesep=2mm]{->}{s2}{t2}\aput{:U}{$c_{22}$}
\ncline[nodesep=2mm]{->}{s2}{tn}\bput{:U}{$c_{2n}$}
\ncline[nodesep=2mm]{->}{sn}{tn}\bput{:U}{$c_{nn}$}
\ncline[nodesep=2mm]{->}{source}{s1}\aput{:U}{$\Delta(s_1)$}
\ncline[nodesep=2mm]{->}{source}{s2}\bput{:U}{$\Delta(s_2)$}
\ncline[nodesep=2mm]{->}{source}{sn}\bput{:U}{$\Delta(s_n)$}
\ncline[nodesep=2mm]{->}{t1}{sink}\aput{:U}{$\Theta(s_1)$}
\ncline[nodesep=2mm]{->}{t2}{sink}\bput{:U}{$\Theta(s_2)$}
\ncline[nodesep=2mm]{->}{tn}{sink}\bput{:U}{$\Theta(s_n)$}
\uput[d](0,-4.5){$c_{ij}=1$ for all $i,j$}
\end{center}\vskip 4cm
\caption{The network $\CN(\Delta,\Theta,\aRel)$}\label{f:network}
\end{figure}
}

The next lemma appeared as Lemma 5.1 in \cite{BEM00}.
\begin{lemma}\label{l:lift.flow}
Let $S$ be a finite set, $\Delta,\Theta\in\dist{S}$ and
$\aRel\subseteq S\times S$. The following statements are equivalent.
\begin{enumerate}
\item There exists a weight function $w$ for $(\Delta,\Theta)$ with
  respect to $\aRel$.
\item The maximum flow in $\CN(\Delta,\Theta,\aRel)$ is $1$.
\hfill\qed
\end{enumerate}
\end{lemma}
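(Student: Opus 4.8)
The plan is to prove the two directions separately, each by an explicit construction that transforms a weight function into a flow of value $1$ and vice versa. The network $\CN(\Delta,\Theta,\aRel)$ has exactly the shape needed for this correspondence: the ``middle'' edges $(s,t')$ are indexed by the pairs in $\aRel$, the source edges $(\bot,s)$ carry capacity $\Delta(s)$, and the sink edges $(t',\top)$ carry capacity $\Theta(t)$. A weight function $w$ is, essentially by definition, an assignment of a nonnegative number to each pair $(s,t)\in\aRel$ whose row sums are the $\Delta(s)$ and whose column sums are the $\Theta(t)$; this is precisely the data of a flow along the middle edges, and the two conservation conditions force the unique consistent values on the source and sink edges.

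\medskip
\noindent\emph{From (1) to (2).} Given a weight function $w$ for $(\Delta,\Theta)$ with respect to $\aRel$, I would define a flow function $f$ by $f(s,t') := w(s,t)$ for each edge $(s,t')\in E$ (note $(s,t')\in E$ iff $(s,t)\in\aRel$, and condition (c) of the weight function guarantees $w(s,t)=0$ whenever $(s,t)\notin\aRel$, so this is well defined), together with $f(\bot,s) := \Delta(s) = \sum_{t}w(s,t)$ and $f(t',\top) := \Theta(t) = \sum_{s}w(s,t)$. The capacity bounds $0\le f(e)\le c(e)$ hold: on source and sink edges $f$ equals the capacity, and on a middle edge $0\le w(s,t)\le\sum_{t}w(s,t)=\Delta(s)\le 1 = c(s,t')$. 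Flow conservation at an internal node $s\in S$ reads $f(\bot,s) = \sum_{t}f(s,t')$, i.e.\ $\Delta(s)=\sum_t w(s,t)$, which is condition (a); at a node $t'\in S'$ it reads $\sum_s f(s,t') = f(t',\top)$, i.e.\ $\sum_s w(s,t)=\Theta(t)$, which is condition (b). Finally $\bot$ has no incoming edges, so $F(f)=\sum_{s}f(\bot,s)=\sum_s\Delta(s)=1$. Since every edge out of $\bot$ is already saturated, no flow can exceed $1$, so the maximum flow is exactly $1$.

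\medskip
\noindent\emph{From (2) to (1).} Conversely, suppose the maximum flow is $1$ and let $f$ be a flow with $F(f)=1$. The only edges out of $\bot$ are the $(\bot,s)$ with total capacity $\sum_s\Delta(s)=1$, so $F(f)=1$ forces $f(\bot,s)=\Delta(s)=c(\bot,s)$ for every $s$ (every source edge is saturated); symmetrically, since the total capacity into $\top$ is $\sum_t\Theta(t)=1$ and the flow into $\top$ equals $F(f)=1$, we get $f(t',\top)=\Theta(t)$ for every $t$. Define $w(s,t):=f(s,t')$ if $(s,t)\in\aRel$ and $w(s,t):=0$ otherwise. Nonnegativity is immediate. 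Conservation at $s\in S$ gives $\sum_t w(s,t)=\sum_{(s,t)\in\aRel}f(s,t') = f(\bot,s)=\Delta(s)$, which is (a); conservation at $t'$ gives $\sum_s w(s,t)=f(t',\top)=\Theta(t)$, which is (b); and $w(s,t)>0$ can only happen when $(s,t)\in\aRel$ by construction, which is (c). Hence $w$ is a weight function.

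\medskip
\noindent There is no real obstacle here; this is the routine ``max-flow equals $1$ iff the obvious transportation plan exists'' argument, and the only point requiring a moment's care is that a flow of value $1$ must saturate \emph{all} edges incident to the source (and dually the sink), which is what pins down the boundary values and lets one read off a genuine weight function rather than a sub-transportation. One could alternatively invoke the integral/combinatorial max-flow--min-cut machinery, but the direct construction above is shorter and self-contained.
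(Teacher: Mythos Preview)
Your proof is correct. Note, however, that the paper does not actually supply its own proof of this lemma: the statement is immediately followed by a \qed\ symbol, and the preceding sentence attributes it to Lemma~5.1 of \cite{BEM00}. So there is no proof in the paper to compare against; you have filled in what the paper merely cites. The two-way construction you give---turning a weight function into a flow saturating every source and sink edge, and conversely reading off a weight function from any flow of value~$1$ once one observes that such a flow must saturate those edges---is the standard argument, and your attention to the saturation point is exactly the right place to be careful.
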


Since the lifting operation given in Definition~\ref{d:lifting} can
also be stated in terms of weight functions, we obtain the following
characterisation using network flow.
\begin{theorem}\label{t:lift.flow}
Let $S$ be a finite set, $\Delta,\Theta\in\dist{S}$ and
$\aRel\subseteq S\times S$. Then $\Delta\lift{\aRel}\Theta$ if and
only if the maximum flow in
  $\CN(\Delta,\Theta,\aRel)$ is $1$.
\end{theorem}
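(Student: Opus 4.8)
The plan is to chain together the two results already available in the excerpt. Theorem~\ref{t:lifting.alternative}(1) tells us that $\Delta\lift{\aRel}\Theta$ holds if and only if there exists a weight function $w$ for $(\Delta,\Theta)$ with respect to $\aRel$, namely a function $w:S\times S\to[0,1]$ satisfying the three marginal/support conditions. Lemma~\ref{l:lift.flow} tells us that such a weight function exists if and only if the maximum flow in the network $\CN(\Delta,\Theta,\aRel)$ equals $1$. Composing these two equivalences immediately yields the claimed equivalence $\Delta\lift{\aRel}\Theta \iff$ the maximum flow in $\CN(\Delta,\Theta,\aRel)$ is $1$.

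Concretely, the proof is two lines: first I would invoke Theorem~\ref{t:lifting.alternative}(1) to replace the statement $\Delta\lift{\aRel}\Theta$ by the existence of a weight function for $(\Delta,\Theta)$ w.r.t.\ $\aRel$; then I would invoke Lemma~\ref{l:lift.flow} to replace the existence of that weight function by the condition that the maximum flow in $\CN(\Delta,\Theta,\aRel)$ is $1$. One should note the mild mismatch in hypotheses: Theorem~\ref{t:lifting.alternative}(1) as stated allows distributions over two possibly different sets $S$ and $T$, whereas here $\aRel\subseteq S\times S$ and $\Delta,\Theta\in\dist{S}$; this is precisely the special case needed, so no extra work is required, and the "weight function" terminology of Lemma~\ref{l:lift.flow} is exactly the $w$ produced by Theorem~\ref{t:lifting.alternative}(1).

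There is essentially no obstacle here, since both ingredients are quoted. If one wanted to be self-contained one would have to actually prove Lemma~\ref{l:lift.flow} — that a weight function of total mass $1$ is the same thing as a maximum flow of value $1$ in $\CN(\Delta,\Theta,\aRel)$ — and that is the only place with genuine content: given a flow $f$ of value $1$, saturation of all edges out of $\bot$ and into $\top$ (forced because $\sum_s\Delta(s)=\sum_t\Theta(t)=1$) together with flow conservation at the internal nodes forces the restriction $w(s,t):=f(s,t')$ to be a weight function; conversely a weight function defines such a flow edge by edge. But since the excerpt states Lemma~\ref{l:lift.flow} as a citable fact (Lemma 5.1 of \cite{BEM00}), the proof of Theorem~\ref{t:lift.flow} itself is just the observation that the lifting relation admits a weight-function characterisation and hence inherits the network-flow characterisation.
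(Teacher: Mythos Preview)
Your proposal is correct and matches the paper's own proof exactly: the paper simply writes ``Combining Theorem~\ref{t:lifting.alternative}(1) and Lemma~\ref{l:lift.flow}.'' Your additional remarks about the hypothesis match and the content of Lemma~\ref{l:lift.flow} are accurate but go beyond what the paper supplies.
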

\begin{proof}
Combining Theorem~\ref{t:lifting.alternative}(1) and
Lemma~\ref{l:lift.flow}.
\end{proof}

The above property will play an important role in
Section~\ref{s:algo} to give an ``on the fly" algorithm for checking
probabilistic bisimilarity.

\section{Probabilistic bisimulation}\label{s:pbisi}
With a solid base of the lifting operation, we can proceed to define
a probabilistic version of bisimulation. We start with a
probabilistic generalisation of labelled transition systems (LTSs).
\begin{definition}
A \emph{probabilistic labelled transition
system}\index{probabilistic labelled transition system}
(pLTS)\footnote{Essentially the same model has appeared in the
literature under different names such as \emph{NP-systems}
\cite{JHW94}, \emph{probabilistic processes} \cite{JW95},
\emph{simple probabilistic automata} \cite{Seg95},
\emph{probabilistic transition systems} \cite{JW02} etc.
Furthermore, there are strong structural similarities with
\emph{Markov
  Decision Processes} \cite{Put94,DGMZ07}.} is a triple\\
$\langle S, \Act, \rightarrow \rangle$, where
\begin{enumerate}
\item $S$ is a set of states;
\item $\Act$ is a set of actions;
\item $\rightarrow \;\;\subseteq\;\; S \times \Act \times \dist{S}$ is the transition relation.
\end{enumerate}
As with LTSs, we usually write $s \ar{a} \Delta$ in place of
$(s,a,\Delta) \in \;\rightarrow$.
A pLTS is \emph{finitely branching} if for each state $s\in S$ the
set $\sset{\langle \alpha,\Delta\rangle\mid s\ar{\alpha}\Delta,
\alpha\in \Act, \Delta\in\dist{S}}$ is finite; if moreover $S$ is
finite, then the pLTS is \emph{finitary}.
\end{definition}

In a pLTS, one step of transition leaves a single state but might
end up in a set of states; each of them can be reached with certain
probability. An LTS may be viewed as a degenerate pLTS, one in which
only point distributions are used.

Let $s$ and $t$ are two states in a pLTS, we say $t$ can simulate
the behaviour of $s$ if the latter can exhibit action $a$ and lead
to distribution $\Delta$ then the former can also perform $a$ and
lead to a distribution, say $\Theta$, which can mimic $\Delta$ in
successor states. We are interested in a relation between two
states, but it is expressed by invoking a relation between two
distributions. To formalise the mimicking of one distribution by the
other, we make use of the lifting operation investigated in
Section~\ref{s:lift}.

\begin{definition}\label{d:sbisi2}
A relation $\aRel\subseteq S \times S$ is a {\em probabilistic
  simulation} if $s\ \aRel\ t$ implies
\begin{itemize}
\item if $s\ar{a}\Delta$ then there exists some $\Theta$ such that
  $t\ar{a}\Theta$ and $\Delta \lift{\aRel} \Theta$.
\end{itemize}
If both $\aRel$ and $\aRel^{-1}$ are probabilistic simulations, then
$\aRel$ is a {\em
  probabilistic bisimulation}. The largest probabilistic
  bisimulation, denoted by $\BISI$, is called \emph{probabilistic
  bisimilarity}.
\end{definition}

As in the nonprobabilistic setting, probabilistic bisimilarity can
be approximated by a family of inductively defined relations.
\begin{definition}
Let $S$ be the state set of a pLTS. We define:
\begin{itemize}
\item $\BISI_0:=S\times S$
\item $s\BISI_{n+1}t$, for $n\geq 0$, if
\begin{enumerate}
\item whenever $s\ar{a}\Delta$, there exists some $\Theta$ such
that $t\ar{a}\Theta$ and $\Delta\lift{\BISI_n} \Theta$;
\item whenever $t\ar{a}\Theta$, there exists some $\Delta$ such
that $s\ar{a}\Delta$ and $\Delta\lift{\BISI_n} \Theta$.
\end{enumerate}
\item $\BISI_{\omega}:=\bigcap_{n\geq 0}\BISI_n$
\end{itemize}
\end{definition}
In general, $\BISI$ is a strictly finer relation than
$\BISI_\omega$. However, the two relations coincide when limited to
finitely branching pLTSs.
\begin{proposition}\label{p:app}
On finitely branching pLTSs, $\BISI_\omega$ coincides with $\BISI$.
\end{proposition}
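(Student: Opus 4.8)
The plan is to prove the two inclusions $\BISI \subseteq \BISI_\omega$ and $\BISI_\omega \subseteq \BISI$ separately, with the second inclusion being the one that exploits finite branching.

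For the easy direction, $\BISI \subseteq \BISI_\omega$, I would show by induction on $n$ that $\BISI \subseteq \BISI_n$ for every $n \geq 0$. The base case $n=0$ is immediate since $\BISI_0 = S \times S$. For the inductive step, suppose $\BISI \subseteq \BISI_n$; I need $\BISI \subseteq \BISI_{n+1}$. Here I would use monotonicity of the lifting operation: if $\mathord{\aRel} \subseteq \mathord{\aRel'}$ then $\mathord{\lift{\aRel}} \subseteq \mathord{\lift{\aRel'}}$ (this follows directly from Definition~\ref{d:lifting}, since $\lift{\aRel'}$ satisfies the two closure clauses required of $\lift{\aRel}$, which is the smallest such relation). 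So if $s \BISI t$, then for any $s \ar{a} \Delta$ there is $t \ar{a} \Theta$ with $\Delta \lift{\BISI} \Theta$, hence $\Delta \lift{\BISI_n} \Theta$; symmetrically for transitions of $t$. Thus $s \BISI_{n+1} t$. Taking the intersection over all $n$ gives $\BISI \subseteq \BISI_\omega$.

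For the hard direction, $\BISI_\omega \subseteq \BISI$ on finitely branching pLTSs, the strategy is to show that $\BISI_\omega$ is itself a probabilistic bisimulation; then it is contained in the largest one, $\BISI$. So suppose $s \BISI_\omega t$ and $s \ar{a} \Delta$. For each $n$, since $s \BISI_{n+1} t$, there exists $\Theta_n$ with $t \ar{a} \Theta_n$ and $\Delta \lift{\BISI_n} \Theta_n$. Because the pLTS is finitely branching, the set $\{\Theta \mid t \ar{a} \Theta\}$ is finite, so some particular $\Theta$ occurs as $\Theta_n$ for infinitely many $n$. Since the relations $\BISI_n$ are decreasing ($\BISI_{n+1} \subseteq \BISI_n$, which should be noted and follows by an easy induction using monotonicity of lifting), $\Delta \lift{\BISI_m} \Theta$ holding for arbitrarily large $m$ gives $\Delta \lift{\BISI_n} \Theta$ for every $n$. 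The remaining point is then to conclude $\Delta \lift{\BISI_\omega} \Theta = \Delta \lift{(\bigcap_n \BISI_n)} \Theta$ from $\forall n.\ \Delta \lift{\BISI_n} \Theta$.

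The main obstacle is exactly that last step: in general $\bigcap_n \lift{\BISI_n}$ may be strictly larger than $\lift{\bigcap_n \BISI_n}$, so having $\Delta \lift{\BISI_n}\Theta$ for all $n$ is a priori weaker than $\Delta \lift{\BISI_\omega}\Theta$. To close the gap I would appeal to the weight-function characterisation of Theorem~\ref{t:lifting.alternative}(1): for each $n$ pick a weight function $w_n : S \times S \to [0,1]$ with marginals $\Delta$ and $\Theta$ and support contained in $\BISI_n$. The set of such functions (for fixed marginals) is a compact subset of $[0,1]^{S\times S}$ — here finiteness of $S$ is used again — so the sequence $(w_n)$ has a convergent subsequence with limit $w$; $w$ still has marginals $\Delta$ and $\Theta$, and if $w(s',t') > 0$ then $w_n(s',t') > 0$ for all large $n$ along the subsequence, forcing $(s',t') \in \BISI_n$ for cofinally many $n$, hence $(s',t') \in \BISI_\omega$. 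Thus $w$ witnesses $\Delta \lift{\BISI_\omega} \Theta$. By symmetry the matching condition for transitions of $t$ holds too, so $\BISI_\omega$ is a probabilistic bisimulation, completing the proof. (An alternative to the compactness argument is a direct combinatorial one: since each $\BISI_n$ is among the finitely many relations on the finite set $S$, the decreasing chain $\BISI_0 \supseteq \BISI_1 \supseteq \cdots$ stabilises, so $\BISI_\omega = \BISI_N$ for some $N$, and then everything is immediate; I would likely present whichever is cleaner, probably this stabilisation observation.)
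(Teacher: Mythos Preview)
Your approach is essentially the paper's: show $\BISI\subseteq\BISI_n$ by induction, then prove that $\BISI_\omega$ is itself a bisimulation by using finite branching to single out one $\Theta$ matching $\Delta$ at every level. The paper phrases the selection dually---it collects the ``bad'' $\Theta$'s (those with $\Delta\not\lift{\BISI_\omega}\Theta$), takes $N$ to be the maximum level at which each fails, and uses $s\BISI_{N+1}t$ to produce a survivor---but this is the same pigeonhole as yours.

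You are in fact more careful than the paper on the point you flag as the ``main obstacle''. The paper simply asserts that $\Delta\not\lift{\BISI_\omega}\Theta$ implies $\Delta\not\lift{\BISI_{n}}\Theta$ for some $n$, i.e.\ the contrapositive of the step you go on to justify, without any argument. Your compactness and stabilisation arguments do close this gap. One small caveat: you phrase both as requiring $S$ to be finite, whereas the proposition only assumes finite branching. This is easily repaired, since all distributions here have finite support: restrict each $\BISI_n$ to $\support{\Delta}\times\support{\Theta}$, and on that finite set the decreasing chain stabilises (equivalently, the weight functions live in a finite-dimensional cube), which is all either argument needs.
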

\begin{proof}
It is trivial to show by induction that $s\BISI t$ implies $s\BISI_n
t$ for all $n\geq 0$, thus $s\BISI_\omega t$.

Now we show that $\BISI_\omega$ is a bisimulation. Suppose
$s\BISI_\omega t$ and $s\ar{a}\Delta$. We have to show that there is
some $\Theta$ with $t\ar{a}\Theta$ and $\Delta\lift{\BISI_\omega}
\Theta$. Consider the set
\[T:=\{\Theta\mid t\ar{a}\Theta \wedge \Delta\not\lift{\BISI_\omega} \Theta\}.\]
For each $\Theta\in T$, we have $\Delta\not\lift{\BISI_\omega}
\Theta$, which means that there is some $n_{\Theta}> 0$ with
$\Delta\not\lift{\BISI_{n_{\Theta}}} \Theta$. Since $t$ is finitely
branching, $T$ is a finite set. Let $N=max\{n_{\Theta}\mid \Theta\in
T\}$. It holds that $\Delta\not\lift{\BISI_N} \Theta$ for all
$\Theta\in T$, since by a straightforward induction on $m$ we can
show that $s\BISI_n t$ implies $s\BISI_m t$ for all $m,n\geq 0$ with
$n>m$. By the assumption $s\BISI_\omega t$ we know that
$s\BISI_{N+1}t$. It follows that there is some $\Theta$ with
$t\ar{a}\Theta$ and $\Delta\lift{\BISI_N} \Theta$, so $\Theta\not\in
T$ and hence $\Delta\lift{\BISI_\omega} \Theta$. By symmetry we also
have that if $t\ar{a}\Theta$ then there is some $\Delta$ with
$s\ar{a}\Delta$ and $\Delta\lift{\BISI_\omega} \Theta$.
\end{proof}
Proposition~\ref{p:app} has appeared in \cite{Bai98}; here we have
given a simpler proof.

\section{Logical characterisation}\label{s:logic}
Let $\CL$ be a logic. We use the notation $\CL(s)$ to stand for the
set of formulae that state $s$ satisfies. This induces an
equivalence relation on states: $s\ =^\CL\ t$ iff $ \CL(s)=\CL(t)$.
Thus, two states are equivalent when they satisfy exactly the same
set of formulae.

In this section we consider two kinds of logical characterisations
of probabilistic bisimilarity.
\begin{definition}
[Adequacy and expressivity]
\begin{enumerate}
\item $\CL$ is \emph{adequate} w.r.t. $\BISI$ if
for any states $s$ and $t$,
\[s=^\CL t~~\mbox{iff}~~ s\BISI t.\]

\item $\CL$ is \emph{expressive} w.r.t. $\BISI$ if for
each state $s$ there exists a \emph{characteristic formula}
$\phi_s\in\CL$ such that, for any states $s$ and $t$,
\[t\models\phi_s~~\mbox{iff}~~s\BISI t.\]
\end{enumerate}
\end{definition}
We will propose a probabilistic extension of the Hennessy-Milner
logic, showing its adequacy, and then a probabilistic extension of
the modal mu-calculus, showing its expressivity.
\subsection{An adequate logic}\label{s:adequate}
We extend the Hennessy-Milner logic by adding a probabilistic choice
modality to express the bebaviour of distributions.

\begin{definition}
The class $\CL$ of modal formulae over $\Act$, ranged over by
$\phi$, is defined by the following grammar:
\[\begin{array}{rcl}
\phi & := & \top\mid\phi_1\wedge\phi_2 \mid \diam{a}\psi \mid
\neg\phi\\
\psi &:=& \bigoplus_{i\in I}p_i\cdot\phi_i \end{array}\] We call
$\phi$ a \emph{state formula} and $\psi$ a \emph{distribution
formula}. Note that a distribution formula $\psi$ only appears as
the continuation of a diamond modality $\diam{a}\psi$. We sometimes
use the finite conjunction $\bigwedge_{i\in I}\phi_i$ as a syntactic
sugar.

The \emph{satisfaction relation}\index{satisfaction relation}
$\models \subseteq S\times\CL $ is defined by
\begin{itemize}
\item $s\models \top$ for all $s\in S$.
\item $s\models \phi_1\wedge\phi_2 $ if $s\models\phi_i$ for
  $i=1,2$.
\item $s\models \diam{a}\psi$ if for some $\Delta\in \dist{S}$,
  $s\ar{a}\Delta$ and $\Delta\models\psi$.
\item $s\models\neg\phi$ if it is not the case that
  $s\models\phi$.
\item $\Delta\models\bigoplus_{i\in I}p_i\cdot\phi_i$ if there are
  $\Delta_i\in\dist{S}$, for all $i\in I, t\in\support{\Delta_i}$, with
  $t\models\phi_i$, such that $\Delta=\sum_{i\in I}p_i\cdot\Delta_i$.
\end{itemize}
\end{definition}
With a slight abuse of notation, we write $\Delta\models\psi$ above
to mean that $\Delta$ satisfies the distribution formula $\psi$.
The introduction of distribution formula distinguishes $\CL$ from
other probabilistic modal logics e.g. \cite{{JWL01,PS07}}.

It turns out that $\CL$ is adequate w.r.t. probabilistic
bisimilarity.
\begin{theorem}[Adequacy]\label{p:modal.characterisation}
Let $s$ and $t$ be any two states in a finitely branching pLTS. Then
$s\BISI t$ if and only if $s =^\CL t$.
\end{theorem}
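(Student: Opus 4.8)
The plan is to prove the two directions of the biconditional separately, following the classical Hennessy--Milner pattern but with the extra bookkeeping needed for the distribution modality.

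\medskip

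\noindent\textbf{Soundness ($s\BISI t \Rightarrow s=^\CL t$).} I would prove by simultaneous induction on the structure of formulae the two statements: (i) for every state formula $\phi$, if $s\BISI t$ then $s\models\phi \iff t\models\phi$; and (ii) for every distribution formula $\psi=\bigoplus_{i\in I}p_i\cdot\phi_i$, if $\Delta\lift{\BISI}\Theta$ then $\Delta\models\psi \iff \Theta\models\psi$. The Boolean cases of (i) are immediate. For $\diam{a}\psi$: if $s\models\diam{a}\psi$ then $s\ar{a}\Delta$ with $\Delta\models\psi$; since $s\BISI t$ there is $\Theta$ with $t\ar{a}\Theta$ and $\Delta\lift{\BISI}\Theta$, so by the induction hypothesis (ii) $\Theta\models\psi$, hence $t\models\diam{a}\psi$; the converse is symmetric. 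For (ii), suppose $\Delta\models\bigoplus_{i\in I}p_i\cdot\phi_i$ via a decomposition $\Delta=\sum_{i\in I}p_i\cdot\Delta_i$ with every state in $\support{\Delta_i}$ satisfying $\phi_i$. Since $\Delta\lift{\BISI}\Theta$, Proposition~\ref{prop:lifting} gives $\Theta=\sum_{i\in I}p_i\cdot\Theta_i$ with $\Delta_i\lift{\BISI}\Theta_i$ for each $i$. By Theorem~\ref{t:lifting.alternative}(1) (or directly Proposition~\ref{p:lifting}) each state $u\in\support{\Theta_i}$ is related by $\BISI$ to some $v\in\support{\Delta_i}$, and $v\models\phi_i$, so by the induction hypothesis (i) $u\models\phi_i$; thus the decomposition $\Theta=\sum_{i\in I}p_i\cdot\Theta_i$ witnesses $\Theta\models\psi$. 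The converse uses $\Theta\lift{\BISI}\Delta$ (from $s\BISI t$ being symmetric) in the same way.

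\medskip

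\noindent\textbf{Completeness ($s=^\CL t \Rightarrow s\BISI t$).} I would show that $=^\CL$ is a probabilistic bisimulation. Fix $s=^\CL t$ and a transition $s\ar{a}\Delta$; I must produce $\Theta$ with $t\ar{a}\Theta$ and $\Delta\lift{=^\CL}\Theta$. Let $\{\Theta_1,\dots,\Theta_k\}$ be the (finite, by finite branching) set of all $a$-successors of $t$. If $\Delta\lift{=^\CL}\Theta_j$ fails for every $j$, then for each $j$ I want to build a state formula that $s$ satisfies but witnesses the failure at $t$. The key lemma here is a \emph{distinguishing-formula} construction: because the pLTS is finitely branching, the set of $=^\CL$-equivalence classes that meet $\support{\Delta}\cup\bigcup_j\support{\Theta_j}$ is finite, and each class $C$ has a formula $\chi_C$ satisfied exactly by the states in $C$ (among the finitely many relevant states). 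Using the $\bigoplus$ modality I can then write a distribution formula $\psi_\Delta=\bigoplus_{C}\Delta(C)\cdot\chi_C$ that $\Delta$ satisfies; I must check that $\Theta_j\models\psi_\Delta$ would force $\Delta\lift{=^\CL}\Theta_j$. This last implication is exactly where Theorem~\ref{t:lifting.alternative}(2) enters: when the relation is an equivalence, $\Delta\lift{=^\CL}\Theta$ iff $\Delta$ and $\Theta$ assign equal mass to every equivalence class; and $\Theta_j\models\bigoplus_C\Delta(C)\cdot\chi_C$ says precisely that $\Theta_j$ can be split so that the $\chi_C$-part has mass $\Delta(C)$, i.e. $\Theta_j(C)\ge\Delta(C)$ for all $C$, which (since both are probability distributions summing to $1$) forces equality. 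Hence if $\Delta\lift{=^\CL}\Theta_j$ fails for all $j$, then $\diam{a}\psi_\Delta$ is satisfied by $s$ but not by $t$, contradicting $s=^\CL t$. By symmetry the matching condition holds in both directions, so $=^\CL$ is a bisimulation and $s=^\CL t$ implies $s\BISI t$.

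\medskip

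\noindent\textbf{Main obstacle.} The routine part is soundness. The delicate part is the completeness direction, specifically the construction of the characteristic formulae $\chi_C$ for the relevant equivalence classes and the verification that $\psi_\Delta$ does the job. Two points need care: first, finite branching must be used to guarantee that only finitely many equivalence classes and finitely many candidate successor distributions are in play, so that finite conjunctions and a finite $\bigoplus$ suffice; second, the translation between "$\Theta\models\bigoplus_C\Delta(C)\cdot\chi_C$" and "$\Theta(C)=\Delta(C)$ for all $C$" must be argued carefully, because the decomposition witnessing the distribution formula need not respect class boundaries a priori — one has to observe that a state satisfying $\chi_C$ lies in $C$, so the mass labelled by $\chi_C$ is concentrated in $C$, giving $\Theta(C)\ge\Delta(C)$, and then use $\sum_C\Delta(C)=\sum_C\Theta(C)=1$ to upgrade $\ge$ to $=$. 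Invoking Theorem~\ref{t:lifting.alternative}(2) then closes the gap to $\Delta\lift{=^\CL}\Theta$.
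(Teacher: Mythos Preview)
Your proof is correct. The soundness direction is essentially the paper's argument, recast as a simultaneous induction on state and distribution formulae; the paper simply folds your clause (ii) into the $\diam{a}$ case, but the content is identical, including the appeal to Proposition~\ref{prop:lifting}.

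The completeness direction takes a genuinely different route. The paper indexes the probabilistic-choice formula by the individual states $s'\in\support{\Delta}$ and, for each candidate successor $\Theta$ of $t$ that fails to match $\Delta$, extracts a single distinguishing subformula $\phi_\Theta$ tailored to that $\Theta$; the witness is then $\diam{a}\bigoplus_{s'\in\support{\Delta}}\Delta(s')\cdot\bigwedge_{\{\Theta:s'_\Theta=s'\}}\phi_\Theta$, and one argues directly from the lifting definition that any $\Theta^\ast$ satisfying it yields a good decomposition. You instead pass to $=^\CL$-equivalence classes, build class-characteristic formulae $\chi_C$ relative to the finite set of relevant states, and use the single witness $\psi_\Delta=\bigoplus_C\Delta(C)\cdot\chi_C$; the observation that $\Theta_j\models\psi_\Delta$ forces $\Theta_j(C)\ge\Delta(C)$ for every $C$, hence equality, lets you invoke Theorem~\ref{t:lifting.alternative}(2) to conclude $\Delta\lift{=^\CL}\Theta_j$.

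Your route is conceptually cleaner: it makes explicit why the equivalence-class characterisation of lifting is useful, and it produces one formula $\psi_\Delta$ that simultaneously rules out every bad $\Theta_j$, rather than accumulating distinguishing pieces per $\Theta$. The paper's route is more economical in the formulae produced (only the distinctions actually witnessed, not full class characteristics) and stays closer to the raw Definition~\ref{d:lifting}, never appealing to Theorem~\ref{t:lifting.alternative}(2). Both arguments use finite branching at the same point: to guarantee that only finitely many successor distributions $\Theta_j$ and hence finitely many relevant states are in play, so that the conjunctions defining $\chi_C$ (respectively $\bigwedge_\Theta\phi_\Theta$) are finite.
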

\begin{proof}
($\Rightarrow$)
 Suppose $s\BISI t$, we show that $s\models\phi \Leftrightarrow
t\models\phi$ by structural induction on $\phi$.
\begin{itemize}
\item Let $s\models\top$, we clearly have $t\models\top$.
\item Let $s \models\phi_1\wedge\phi_2$. Then $s\models\phi_i$
  for $i=1,2$. So by induction $t\models\phi_i$, and we have
  $t\models \phi_1\wedge\phi_2$. By symmetry we also have
  $t\models \phi_1\wedge\phi_2$ implies $s\models
  \phi_1\wedge\phi_2$.
\item Let $s\models\neg\phi$. So $s\not\models\phi$, and by induction
  we have $t\not\models\phi$. Thus $t\models\neg\phi$. By symmetry we
  also have $t\not\models\phi$ implies $s\not\models\phi$.
\item Let $s\models\diam{a}\bigoplus_{i\in I}p_i\cdot\phi_i$. Then $s\ar{a}\Delta$ and $\Delta\models\bigoplus_{i\in I}p_i\cdot\phi_i$
  for some $\Delta$. So
  $\Delta=\sum_{i\in i}p_i\cdot\Delta_i$ and for all $i\in I$ and
  $s'\in\support{\Delta_i}$ we have $s'\models\phi_i$. Since $s\BISI t$, there is some $\Theta$ with $t\ar{a}\Theta$
  and $\Delta\lift{\BISI} \Theta$. By
  Proposition~\ref{prop:lifting}  we have that $\Theta=\sum_{i\in
  I}p_i\cdot\Theta_i$ and $\Delta_i\lift{\BISI}\Theta_i$. It follows
  that for each $t'\in\support{\Theta_i}$ there is some
  $s'\in\support{\Delta_i}$ with $s'\BISI t'$.
  So by induction we have $t'\models\phi_i$ for all $t'\in\support{\Theta_i}$ with $i\in I$.
  Therefore, we have $\Theta\models\bigoplus_{i\in
  I}p_i\cdot\phi_i$. It follows that $t\models\diam{a}\bigoplus_{i\in
  I}p_i\cdot\phi_i$.
  By symmetry we also have
  $t\models\diam{a}\bigoplus_{i\in I}p_i\cdot\phi_i\Rightarrow s\models\diam{a}\bigoplus_{i\in I}p_i\cdot\phi_i$.
\end{itemize}

($\Leftarrow$) We show that the relation $=^\CL$ is a probabilistic
bisimulation. Suppose $s =^\CL t$ and $s\ar{a}\Delta$. We have to
show that there is some $\Theta$ with $t\ar{a}\Theta$ and $\Delta
\lift{(=^\CL)} \Theta$. Consider the set
\[T:=\{\Theta \mid t\ar{a}\Theta \wedge \Theta=\sum_{s'\in\support{\Delta}}\Delta(s')\cdot \Theta
_{s'}\wedge \exists s'\in\support{\Delta},\exists
t'\in\support{\Theta_{s'}}: s'\not=^\CL t'\}\]
 For each $\Theta\in T$, there must be some
 $s'_\Theta\in\support{\Delta}$ and
 $t'_\Theta\in\support{\Theta_{s'_\Theta}}$ such that (i) either there is a formula $\phi_{\Theta}$ with
$s'_\Theta\models\phi_{\Theta}$ but
$t'_\Theta\not\models\phi_{\Theta}$ (ii) or there is a formula
$\phi'_{\Theta}$ with $t'_\Theta\models\phi'_{\Theta}$ but
$s'_\Theta\not\models\phi'_{\Theta}$. In the latter case we set
$\phi_{\Theta}=\neg\phi'_{\Theta}$ and return back to the former
case. So for each $s'\in\support{\Delta}$ it holds that
$s'\models\bigwedge_{\sset{\Theta\in T\mid s'_\Theta =
s'}}\phi_\Theta$ and for each $\Theta\in T$ with $s'_\Theta=s'$
there is some $t'_{\Theta}\in\support{\Theta_{s'}}$ with
$t'_{\Theta}\not\models \bigwedge_{\sset{\Theta\in T\mid s'_\Theta =
s'}}\phi_\Theta$. Let
\[\phi:=\diam{a}\bigoplus_{s'\in\support{\Delta}}\Delta(s')\cdot\bigwedge_{\sset{\Theta\in T\mid s'_{\Theta}=s'}}\phi_{\Theta}.\]
It is clear that $s\models\phi$, hence $t\models\phi$ by $s=^\CL t$.
It follows that there must be a $\Theta^\ast$ with
$t\ar{a}\Theta^\ast$,
$\Theta^\ast=\sum_{s'\in\support{\Delta}}\Delta(s')\cdot\Theta^\ast_{s'}$
and for each $s'\in\support{\Delta},
t'\in\support{\Theta^\ast_{s'}}$ we have
$t'\models\bigwedge_{\sset{\Theta\in T\mid s'_\Theta =
s'}}\phi_\Theta$. This means that $\Theta^\ast\not\in T$ and hence
for each $s'\in\support{\Delta}, t'\in\support{\Theta^\ast_{s'}}$ we
have $s'=^\CL t'$. It follows that $\Delta\lift{(=^\CL)}
\Theta^\ast$. By symmetry all transitions of $t$ can be matched up
by transitions of $s$.
\end{proof}

\subsection{An expressive logic}
We now add the probabilistic choice modality introduced in
Section~\ref{s:adequate} to the modal mu-calculus, and show that the
resulting probabilistic mu-calculus is expressive w.r.t.
probabilistic bisimilarity.
\subsubsection{Probabilistic modal mu-calculus} Let $\Var$ be a
countable set of variables.
We define a  set $\CL_\mu$ of modal formulae in positive normal form
given by the following grammar:
\[\begin{array}{rcl}
\phi & := &  \top\mid \bot \mid\diam{a}\psi \mid \boxm{a}\psi \mid \phi_1\wedge\phi_2 \mid \phi_1\vee\phi_2 \mid X \mid \mu X.\phi \mid \nu X.\phi\\
\psi & := & \bigoplus_{i\in I}p_i\cdot\phi_i\end{array}\]  where
$a\in\Act$, $I$ is a finite index set and $\sum_{i\in I}p_i=1$. Here
we still write $\phi$ for a state formula and $\psi$ a distribution
formula. Sometimes we also use the finite conjunction
$\bigwedge_{i\in I}\phi_i$ and disjunction $\bigvee_{i\in I}\phi_i$.
As usual, we have $\bigwedge_{i\in\emptyset}\phi_i=\top$ and
$\bigvee_{i\in \emptyset}\phi_i=\bot$.

The two fixed point operators $\mu X$ and $\nu X$ bind the
respective variable $X$. We apply the usual terminology of free and
bound variables in a formula and write $\fv(\phi)$ for the set of
free variables in $\phi$.

We use {\em environments}, which binds free variables to sets of
distributions, in order to give semantics to formulae. We fix a
finitary pLTS and let $S$ be its state set. Let
\[\Env=\setof{\rho}{\rho:\Var\rightarrow\Pow{S}}\] be the set of
all environments and ranged over by $\rho$. For a set $V\subseteq S$
and a variable $X\in\Var$, we write $\rho[X\mapsto V]$ for the
environment that maps $X$ to $V$ and $Y$ to $\rho(Y)$ for all
$Y\not=X$.

The semantics of a formula $\phi$ can be given as the set of states
satisfying it. This entails a semantic functional $\Op{\ }:\CL_\mu
\rightarrow \Env \rightarrow \Pow{S}$ defined inductively in Figure
\ref{f:semantics}, where we also apply $\Op{\ }$ to distribution
formulae and $\Op{\psi}$ is interpreted as the set of distributions
that satisfy $\psi$. As the meaning of a closed formula $\phi$ does
not depend on the environment, we write $\Op{\phi}$ for
$\Op{\phi}_\rho$ where $\rho$ is an arbitrary environment.

\begin{figure}
\[\begin{array}{rcl}
\Op{\top}_\rho & = & S \\
\Op{\bot}_\rho & = & \emptyset \\
\Op{\phi_1\wedge\phi_2}_\rho & = & \Op{\phi_1}_\rho\cap \Op{\phi_2}_\rho \\
\Op{\phi_1\vee\phi_2}_\rho & = & \Op{\phi_1}_\rho\cup \Op{\phi_2}_\rho\\
\Op{\diam{a}\psi}_\rho & = & \setof{s\in S}{\exists
  \Delta:s \ar{a} \Delta\ \wedge\ \Delta\in\Op{\psi}_\rho}\\
\Op{\boxm{a}\psi}_\rho & = & \setof{s\in S}{\forall
  \Delta:s \ar{a} \Delta\ \Rightarrow\
  \Delta\in\Op{\psi}_\rho}\\
\Op{X}_\rho & = & \rho(X) \\
\Op{\mu X.\phi}_\rho & = & \bigcap\setof{V\subseteq
S}{\Op{\phi}_{\rho[X\mapsto V]} \subseteq
  V}\\
\Op{\nu X.\phi}_\rho & = & \bigcup\setof{V\subseteq
S}{\Op{\phi}_{\rho[X\mapsto V]} \supseteq
  V}\\
  \Op{\bigoplus_{i\in I}p_i\cdot\phi_i}_\rho & = &  \setof{\Delta\in
\dist{S}}{\Delta= \bigoplus_{i\in I}p_i\cdot\Delta_i
  \ \wedge\ \forall i\in I,\forall t\in\support{\Delta_i}: t\in\Op{\phi_i}_\rho}
\end{array}\]
\caption{Semantics of probabilistic modal
mu-calculus}\label{f:semantics}
\end{figure}

The semantics of probabilistic modal mu-calculus (pMu) is the same
as that of the modal mu-calculus \cite{Koz83} except for the
probabilistic choice modality which are satisfied by distributions.
The characterisation of {\em least fixed point
  formula} $\mu X.\phi$ and {\em greatest fixed point formula} $\nu
X.\phi$ follows from the well-known Knaster-Tarski fixed point
theorem \cite{Tar55}.

We shall consider (closed) {\em equation systems} of formulae of the
form
\[\begin{array}{rcl}
E: X_1 & = & \phi_1 \\
       & \vdots & \\
   X_n & = & \phi_n
\end{array}\]
where $X_1,...,X_n$ are mutually distinct variables and
$\phi_1,...,\phi_n$ are formulae having at most $X_1,...,X_n$ as
free variables. Here $E$ can be viewed as a function
$E:\Var\rightarrow\CL_\mu$ defined by $E(X_i)=\phi_i$ for
$i=1,...,n$ and $E(Y)=Y$ for other variables $Y\in\Var$.

An environment $\rho$ is a {\em solution} of an equation system $E$
if $\forall i:\rho(X_i) = \Op{\phi_i}_\rho$. The existence of
solutions for an equation system can be seen from the following
arguments. The set $\Env$, which includes all candidates for
solutions, together with the partial order $\leq$ defined by
\[\rho\leq\rho'\ \mbox{\rm  iff}\ \forall
X\in\Var:\rho(X)\subseteq\rho'(X)\] forms a complete lattice. The
{\em equation functional} $\CE:\Env\rightarrow \Env$ given in the
$\lambda$-calculus notation by \[\CE:=\lambda\rho.\lambda
X.\Op{E(X)}_\rho\] is monotonic. Thus, the Knaster-Tarski fixed
point theorem guarantees existence of solutions, and the largest
solution
\[\rho_E:= \bigsqcup \setof{\rho}{\rho\leq\CE(\rho)}\]

\subsubsection{Characteristic equation systems}\label{s:ces} As
studied in \cite{SI94}, the behaviour of a process can be
characterised by an equation system of modal formulae. Below we show
that this idea also applies in the probabilistic setting.

\begin{definition}\label{d:cess}
Given a finitary  pLTS, its {\em characteristic equation system}
consists of one equation for each state $s_1,...,s_n\in S$.
\[\begin{array}{rcl}
E: X_{s_1} & = & \phi_{s_1} \\
       & \vdots & \\
   X_{s_n} & = & \phi_{s_n}
\end{array}\]
where
\begin{equation}\label{e:cf}
\phi_s:=(\bigwedge_{s\ar{a}\Delta}\diam{a}X_{\Delta})\wedge
(\bigwedge_{a\in\Act}\boxm{a}\bigvee_{s\ar{a}\Delta} X_{\Delta} )
\end{equation}
with $X_\Delta := \bigoplus_{s\in\support{\Delta}}\Delta(s)\cdot
X_s$.
\end{definition}

\begin{theorem}\label{t:ces}
Suppose $E$ is a characteristic equation system. Then $s\BISI t$ if
and only if $t\in \rho_E(X_s)$.
\end{theorem}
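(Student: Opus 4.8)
The plan is to prove the two directions separately, in each case exploiting the coinductive nature of both $\BISI$ and the greatest solution $\rho_E$. For the direction $s\BISI t \Rightarrow t\in\rho_E(X_s)$, I would define the environment $\rho^\ast$ by $\rho^\ast(X_s):=\{t\in S\mid s\BISI t\}$ for each state $s$ (and arbitrarily on other variables), and show that $\rho^\ast$ is a post-fixed point of the equation functional $\CE$, i.e.\ $\rho^\ast\leq\CE(\rho^\ast)$; since $\rho_E$ is the largest such post-fixed point, this gives $\rho^\ast\leq\rho_E$, hence $t\in\rho^\ast(X_s)\subseteq\rho_E(X_s)$ whenever $s\BISI t$. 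To verify $\rho^\ast(X_s)\subseteq\Op{\phi_s}_{\rho^\ast}$, take $t$ with $s\BISI t$ and check $t$ satisfies each conjunct of $\phi_s$ in (\ref{e:cf}) under $\rho^\ast$. For the conjunct $\diam{a}X_\Delta$ coming from $s\ar{a}\Delta$: since $s\BISI t$, there is $\Theta$ with $t\ar{a}\Theta$ and $\Delta\lift{\BISI}\Theta$; I must then show $\Theta\in\Op{X_\Delta}_{\rho^\ast}=\Op{\bigoplus_{u\in\support{\Delta}}\Delta(u)\cdot X_u}_{\rho^\ast}$, which follows from Proposition~\ref{p:lifting} (or Proposition~\ref{prop:lifting}) applied to $\Delta\lift{\BISI}\Theta$: decompose $\Theta=\sum_{u\in\support{\Delta}}\Delta(u)\cdot\Theta_u$ with $\Delta(u)$-weights matching and $\support{\Theta_u}$ consisting of states $\BISI$-related to $u$, i.e.\ lying in $\rho^\ast(X_u)$. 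The box conjunct $\boxm{a}\bigvee_{s\ar{a}\Delta}X_\Delta$ is handled dually: any $t\ar{a}\Theta$ is matched by some $s\ar{a}\Delta$ with $\Delta\lift{\BISI}\Theta$, and the same decomposition shows $\Theta\in\Op{X_\Delta}_{\rho^\ast}\subseteq\Op{\bigvee_{s\ar{a}\Delta}X_\Delta}_{\rho^\ast}$.

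For the converse, $t\in\rho_E(X_s)\Rightarrow s\BISI t$, I would show that the relation $\aRel:=\{(s,t)\mid t\in\rho_E(X_s)\}$ is a probabilistic bisimulation. Since $\rho_E$ is a solution of $E$, we have $\rho_E(X_s)=\Op{\phi_s}_{\rho_E}$ for every $s$. So $t\in\rho_E(X_s)$ means $t\in\Op{\phi_s}_{\rho_E}$, i.e.\ $t$ satisfies both the big conjunction of diamonds and the big conjunction of boxes. Given $s\ar{a}\Delta$: the conjunct $\diam{a}X_\Delta$ forces some $\Theta$ with $t\ar{a}\Theta$ and $\Theta\in\Op{X_\Delta}_{\rho_E}$, meaning $\Theta=\sum_{u\in\support{\Delta}}\Delta(u)\cdot\Theta_u$ with every state in $\support{\Theta_u}$ lying in $\rho_E(X_u)$, i.e.\ $u\aRel t'$ for all $t'\in\support{\Theta_u}$. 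Writing $\Delta=\sum_{u\in\support\Delta}\Delta(u)\cdot\pdist u$ and $\Theta=\sum_u\Delta(u)\cdot\Theta_u$, and noting $\pdist u\lift{\aRel}\Theta_u$ (each point $u$ relates to each point of $\support{\Theta_u}$, so a trivial weight function exists — or use Definition~\ref{d:lifting} directly), the linearity clause of Definition~\ref{d:lifting} gives $\Delta\lift{\aRel}\Theta$. Conversely, given $t\ar{a}\Theta$: the box conjunct $\boxm{a}\bigvee_{s\ar{a}\Delta}X_\Delta$ gives $\Theta\in\Op{X_\Delta}_{\rho_E}$ for some particular $\Delta$ with $s\ar{a}\Delta$, and the same argument yields $\Delta\lift{\aRel}\Theta$. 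Hence $\aRel$ and (by the symmetric shape of $\phi_s$) $\aRel^{-1}$ are simulations, so $\aRel\subseteq\BISI$, which is what we need.

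One technical point worth flagging: the box conjunct in (\ref{e:cf}) ranges over \emph{all} $a\in\Act$, including actions $a$ with no transition $s\ar{a}\Delta$, in which case $\bigvee_{s\ar{a}\Delta}X_\Delta=\bot$; this correctly forces $t$ to have no $a$-transition either, matching the requirement in the definition of bisimulation. Also the empty-conjunction/empty-disjunction conventions ($\bigwedge_{\emptyset}=\top$, $\bigvee_{\emptyset}=\bot$) must be used consistently. I expect the main obstacle to be the careful bookkeeping in translating between $\Delta\lift{\BISI}\Theta$ and membership $\Theta\in\Op{X_\Delta}_\rho$: the decomposition supplied by Proposition~\ref{p:lifting} indexes over a set where states may repeat, whereas $X_\Delta=\bigoplus_{u\in\support\Delta}\Delta(u)\cdot X_u$ is indexed by the \emph{distinct} support states, so one has to regroup the decomposition by support state and check the accumulated weights are exactly $\Delta(u)$ — this is where Proposition~\ref{prop:lifting} (uniqueness of the coefficient pattern when the $\Delta_i$ are fixed as point masses $\pdist u$) does the real work. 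The fixed-point arguments themselves are routine given Knaster--Tarski and the fact, already recorded in the excerpt, that $\CE$ is monotone on the complete lattice $\Env$.
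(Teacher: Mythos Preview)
Your proposal is correct and follows essentially the same route as the paper: for $(\Rightarrow)$ you define $\rho^\ast(X_s)=\{t\mid s\BISI t\}$ and verify it is a post-fixed point of $\CE$ (the paper calls this environment $\rho_\BISI$), and for $(\Leftarrow)$ you set $\aRel=\{(s,t)\mid t\in\rho_E(X_s)\}$ and check it is a bisimulation via the diamond and box conjuncts of $\phi_s$. Your use of Proposition~\ref{prop:lifting} to pass directly from $\Delta\lift{\BISI}\Theta$ (with $\Delta$ written as $\sum_{u\in\support\Delta}\Delta(u)\cdot\pdist u$) to a decomposition $\Theta=\sum_u\Delta(u)\cdot\Theta_u$ indexed by $\support\Delta$ is in fact slightly cleaner than the paper's corresponding step, which invokes Proposition~\ref{p:lifting} and then implicitly regroups the resulting arbitrary-index decomposition to match the support-indexed form of $X_\Delta$.
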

\begin{proof}
($\Leftarrow$) Let $\aRel=\setof{(s,t)}{t\in \rho_E(X_s)}$. We
  first show that
\begin{equation}\label{e:left}
\Theta\in\Op{X_\Delta}_{\rho_E}\ {\rm implies}\
\Delta\lift{\aRel}\Theta.
\end{equation}
Let $\Delta=\bigoplus_{i\in I}p_i\cdot \pdist{s_i}$, then $X_\Delta
= \bigoplus_{i\in I}p_i\cdot X_{s_i}$. Suppose
$\Theta\in\Op{X_\Delta}_{\rho_E}$. We have that
$\Theta=\bigoplus_{i\in I}p_i\cdot\Theta_i$ and, for all $i\in I$
and $t'\in\support{\Theta_i}$, that $t'\in \Op{X_{s_i}}_{\rho_E}$,
i.e. $s_i \aRel t'$. It follows that $\pdist{s_i} \lift{\aRel}
\Theta_i$ and thus
 $\Delta \lift{\aRel} \Theta$.

Now we show that $\aRel$ is a bisimulation.
\begin{enumerate}
\item Suppose $s\aRel t$ and $s\ar{a}\Delta$. Then
  $t\in\rho_E(X_s)=\Op{\phi_s}_{\rho_E}$. It follows from
  (\ref{e:cf}) that $t\in\Op{\diam{a}X_\Delta}_{\rho_E}$. So
  there exists some $\Theta$ such that $t\ar{a}\Theta$ and
  $\Theta \in \Op{X_\Delta}_{\rho_E}$. Now we apply (\ref{e:left}).

\item Suppose $s\aRel t$ and $t\ar{a}\Theta$. Then
  $t\in\rho_E(X_s)=\Op{\phi_s}_{\rho_E}$. It follows from
  (\ref{e:cf}) that $t\in\Op{\boxm{a}\bigvee_{s\ar{a}\Delta}
  X_{\Delta}}$. Notice that it
  must be the case that $s$ can enable action $a$, otherwise, $t\in
  \Op{[a]\bot}_{\rho_E}$ and thus $t$ cannot enable $a$ either, in contradiction
  with the assumption $t\ar{a}\Theta$. Therefore,
  $\Theta\in\Op{\bigvee_{s\ar{a}\Delta}X_{\Delta}}_{\rho_E}$,
  which implies
  $\Theta\in\Op{X_\Delta}_{\rho_E}$ for some $\Delta$ with
  $s\ar{a}\Delta$.
  Now we apply (\ref{e:left}).
\end{enumerate}

($\Rightarrow$) We define the environment $\rho_\BISI$ by
 \[\rho_\BISI(X_s):=\setof{t}{s\BISI t}.\]
 It sufficies to show that $\rho_\BISI$ is a post-fixed point of
 $\CE$, i.e.
\begin{equation}\label{e:post}
\rho_\BISI \leq \CE(\rho_\BISI)
\end{equation}
because in that case we have $\rho_\BISI \leq \rho_E$, thus $s\BISI
t$  implies $t\in\rho_\BISI(X_s)$ which in turn implies $t\in
\rho_E(X_s)$.

We first show that
\begin{equation}\label{e:right}
\Delta\lift{\BISI}\Theta\ {\rm implies}\
\Theta\in\Op{X_\Delta}_{\rho_\BISI}.
\end{equation}
Suppose $\Delta\lift{\BISI}\Theta$, by Proposition~\ref{p:lifting}
we have that (i) $\Delta=\bigoplus_{i\in I}p_i\cdot\pdist{s_i}$,
(ii) $\Theta=\bigoplus_{i\in I}p_i\cdot\pdist{t_i}$, (iii) $s_i\BISI
t_i$ for all $i\in I$. We know from (iii) that
$t_i\in\Op{X_{s_i}}_{\rho_\BISI}$. Using (ii) we have that
$\Theta\in\Op{\bigoplus_{i\in I}p_i\cdot X_{s_i}}_{\rho_\BISI}$.
Using (i) we obtain $\Theta\in\Op{X_\Delta}_{\rho_\BISI}$.

Now we are in a position to show (\ref{e:post}). Suppose $t\in
\rho_\BISI(X_s)$. We must prove that $t\in
\Op{\phi_s}_{\rho_\BISI}$, i.e.
\[t\in
(\bigcap_{s\ar{a}\Delta}\Op{\diam{a}X_\Delta}_{\rho_\BISI}) \cap
(\bigcap_{a\in\Act}\Op{\boxm{a}\bigvee_{s\ar{a}\Delta}X_{\Delta}}_{\rho_\BISI})\]
by (\ref{e:cf}). This can be done by showing that $t$ belongs to
each of the two parts of this intersection.
\begin{enumerate}
\item In the first case, we assume that $s\ar{a}\Delta$. Since $s\BISI
  t$, there exists some $\Theta$ such that $t\ar{a}\Theta$ and
  $\Delta \lift{\BISI} \Theta$. By (\ref{e:right}), we get
  $\Theta\in\Op{X_\Delta}_{\rho_\BISI}$. It follows that $t\in
  \Op{\diam{a}X_\Delta}_{\rho_\BISI}$.

\item In the second case, we suppose $t\ar{a}\Theta$ for any action $a\in\Act$ and distribution $\Theta$. Then by $s\BISI t$ there exists some
 $\Delta$ such that   $s\ar{a}\Delta$ and $\Delta \lift{\BISI}
 \Theta$. By (\ref{e:right}), we get
 $\Theta\in\Op{X_\Delta}_{\rho_\BISI}$. As a consequence,
 $t\in \Op{\boxm{a}\bigvee_{s\ar{a}\Delta}X_{\Delta}}_{\rho_\BISI}$.
 Since this holds for arbitrary action $a$, our desired result
 follows.
\end{enumerate}
\end{proof}

\subsubsection{Characteristic formulae} So far we know how to
construct the characteristic equation system for a finitary pLTS. As
introduced in \cite{Mul98}, the three transformation rules in
Figure~\ref{f:rules} can be used to obtain from an equation system
$E$ a formula whose interpretation coincides with the interpretation
of $X_1$ in the greatest solution of $E$. The formula thus obtained
from a characteristic equation system is called a {\em
characteristic formula}.
\begin{theorem}
Given a characteristic equation system $E$, there is a
characteristic formula $\phi_s$ such that $\rho_E(X_s)=\Op{\phi_s}$
for any state $s$. \hfill\qed
\end{theorem}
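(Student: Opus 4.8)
The plan is to reduce this statement about characteristic formulae to the already-established Theorem~\ref{t:ces} by invoking the correctness of the three transformation rules of \cite{Mul98}. The key conceptual point is that the rules in Figure~\ref{f:rules} are sound in any setting where formulae are interpreted over a complete lattice via a monotone semantic functional; the probabilistic choice modality $\bigoplus_{i\in I}p_i\cdot\phi_i$ does not interfere with this, since it contributes only an extra (monotone) clause to the semantics and never binds variables. So the proof is essentially a matter of checking that the hypotheses under which \cite{Mul98} establishes its elimination theorem are met by $\CL_\mu$ and by the characteristic equation system $E$ of Definition~\ref{d:cess}.

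Concretely, I would proceed as follows. First, recall the three transformation rules and observe that each preserves the greatest solution: the substitution rule replaces an occurrence of $X_j$ in $\phi_i$ by $\phi_j$, which is sound because the $\phi_j$ are monotone and we are computing a greatest fixed point over the complete lattice $\Env$; the rule that turns a self-referential equation $X_i = \phi_i$ into $X_i = \nu X_i.\phi_i$ is sound by the Knaster--Tarski characterisation of $\Op{\nu X.\phi}_\rho$ already cited in the paper; and the rule deleting an equation no longer referred to is trivially sound. Applying these rules repeatedly to the characteristic equation system $E$ with $X_1 = X_s$ designated as the variable of interest yields, after finitely many steps (the system is finite since the pLTS is finitary), a single closed formula $\phi_s\in\CL_\mu$ with the property that $\Op{\phi_s} = \rho_E(X_s)$ — this is exactly the content of the elimination theorem of \cite{Mul98}, now applied to our extended logic. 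Second, combine this equality with Theorem~\ref{t:ces}, which states $s\BISI t$ iff $t\in\rho_E(X_s)$; substituting $\rho_E(X_s)=\Op{\phi_s}$ gives $s\BISI t$ iff $t\in\Op{\phi_s}$, i.e. $t\models\phi_s$, which is precisely the expressivity claim.

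The main obstacle — and the only place where genuine care is needed rather than bookkeeping — is verifying that the transformation rules of \cite{Mul98}, which were designed for the ordinary modal mu-calculus, remain valid once the probabilistic choice modality is present. The subtlety is that a distribution formula $\psi=\bigoplus_{i\in I}p_i\cdot\phi_i$ may contain free state-variables inside its components $\phi_i$, so substitution and fixed-point introduction must be shown to commute correctly with the clause $\Op{\bigoplus_{i\in I}p_i\cdot\phi_i}_\rho = \setof{\Delta}{\Delta=\bigoplus_i p_i\cdot\Delta_i \wedge \forall i\,\forall t\in\support{\Delta_i}: t\in\Op{\phi_i}_\rho}$. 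This follows once one notes that this clause is monotone in $\rho$ and that the semantics is compositional, so the standard substitution lemma ($\Op{\phi[\psi/X]}_\rho = \Op{\phi}_{\rho[X\mapsto\Op{\psi}_\rho]}$, suitably stated across the two syntactic categories) still holds; given that, the soundness proofs of the three rules go through verbatim. Since the paper treats the transformation machinery of \cite{Mul98} as a known tool, I would state this verification briefly and refer to \cite{SI94,Mul98} for the routine details, keeping the proof of the theorem itself to the two-line combination described above.
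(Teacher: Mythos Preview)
Your proposal is correct and follows essentially the same approach as the paper: the paper gives no proof at all (note the bare \qed), simply citing the transformation rules of \cite{Mul98} as a known tool that converts the equation system into a single formula with the same greatest-solution semantics. You have spelled out what the paper leaves implicit, namely the easy check that the new probabilistic choice modality is monotone and compositional so the soundness of the three rules carries over; this is a useful elaboration but not a different route. One minor point of scope: the theorem as stated asserts only $\rho_E(X_s)=\Op{\phi_s}$, while your second step (combining with Theorem~\ref{t:ces} to obtain $s\BISI t$ iff $t\in\Op{\phi_s}$) is the content of the subsequent Corollary rather than of this theorem itself.
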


The above theorem, together with the results in Section~\ref{s:ces},
gives rise to the following corollary.

\begin{corollary}
For each state $s$ in a finitary pLTS, there is a characteristic
formula $\phi_s $ such that $s\BISI t$ iff $t\in\Op{\phi_s }$.
\hfill\qed
\end{corollary}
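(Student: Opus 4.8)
The plan is to derive the corollary directly by composing the two preceding results, so the proof will be a short chaining argument rather than a fresh construction. First I would fix the given finitary pLTS and form its characteristic equation system $E$ as in Definition~\ref{d:cess}, with one equation $X_{s_i}=\phi_{s_i}$ per state. The finitariness hypothesis is precisely what legitimises this: each $\phi_s$ in~(\ref{e:cf}) is a finite conjunction of diamond- and box-formulae indexed by the finitely many transitions $s\ar{a}\Delta$ and the finitely many actions, and $S$ itself is finite, so $E$ is a genuine finite equation system of $\CL_\mu$-formulae.

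Next I would appeal to the elimination theorem just stated: applying the transformation rules of \cite{Mul98} to $E$ yields, for each state $s$, a closed formula $\phi_s$ with $\rho_E(X_s)=\Op{\phi_s}$. Since $\phi_s$ is closed, $\Op{\phi_s}$ is independent of the environment, so this identity is unambiguous. Finally, Theorem~\ref{t:ces} gives $s\BISI t$ iff $t\in\rho_E(X_s)$. Putting the two together, $s\BISI t$ iff $t\in\rho_E(X_s)$ iff $t\in\Op{\phi_s}$, which is exactly the assertion of the corollary.

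I do not expect a genuine obstacle here: all the mathematical work has already been carried out --- in Theorem~\ref{t:ces}, whose proof rests on Proposition~\ref{p:lifting} and on $\rho_E$ being the greatest post-fixed point of the equation functional $\CE$, and in the equation-system elimination result imported from \cite{Mul98}. The only subtlety worth flagging in the write-up is that finitariness is invoked twice, once to make the characteristic equation system well-formed and once (inside Theorem~\ref{t:ces}) to make bisimilarity coincide with the interpretation of $X_s$ in $\rho_E$; hence the corollary is correctly restricted to finitary pLTSs.
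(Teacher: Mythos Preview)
Your proposal is correct and follows exactly the paper's approach: the corollary is obtained by chaining Theorem~\ref{t:ces} (that $s\BISI t$ iff $t\in\rho_E(X_s)$) with the preceding theorem (that $\rho_E(X_s)=\Op{\phi_s}$). The paper treats this as immediate and gives no further argument, so your write-up is, if anything, more detailed than what the paper provides.
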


\begin{figure}
\begin{enumerate}
\item Rule 1: $E \rightarrow F$
\item Rule 2: $E \rightarrow G$
\item Rule 3: $E \rightarrow H$ if $X_n\not\in\fv(\phi_1,...,\phi_n)$
\end{enumerate}

\[\begin{array}{rclrclrclrcl}
E: X_1 & = & \phi_1 \qquad & F: X_1 & = & \phi_1 \qquad & G: X_1 & =
&
       \phi_1[\phi_n/X_n] \qquad & H: X_1 & = & \phi_1 \\
       & \vdots & & & \vdots & &  & \vdots & & & \vdots & \\
   X_{n-1} & = & \phi_{n-1} & X_{n-1} & = & \phi_{n-1} & X_{n-1} & = &
       \phi_{n-1}[\phi_n/X_n] & X_{n-1} & = & \phi_{n-1} \\
   X_n & = & \phi_n & X_n & = & \nu X_n.\phi_n &  X_n & = & \phi_n & &
       &
\end{array}\]
\caption{Transformation rules}\label{f:rules}
\end{figure}

\section{Metric characterisation}\label{s:metric}
In the definition of probabilistic bisimulation probabilities are
treated as labels since they are matched only when they are
identical. One may argue that this does not provide a robust
relation: Processes that differ for a very small probability, for
instance, would be considered just as different as processes that
perform completely different actions. This is particularly relevant
to many applications where specifications can be given as perfect,
but impractical processes and other, practical processes are
considered acceptable if they only differ from the specification
with a negligible probability.

To find a more flexible way to differentiate processes, researchers
in this area have borrowed from mathematics the notion of
metric\footnote{For simplicity, in this section we use the term
metric to denote both metric and pseudometric.  All the results are
based on pseudometrics.}.  A metric is defined as a function that
associates a distance with a pair of elements. Whereas topologists
use metrics as a tool to study continuity and convergence, we will
use them to provide a measure of the difference between two
processes that are not quite bisimilar.

 Since different processes may behave  the same, they will be
  given distance zero in our metric semantics. So we are more
  interested in pseudometrics than metrics.

In the rest of this section, we fix a finite state pLTS
$(S,\Act,\ar{})$ and provide the set of pseudometrics on $S$ with
the following partial order.
\begin{definition}
The relation $\preceq$ for the set $\CM$ of $1$-bounded
pseudometrics on $S$ is defined by
\[m_1\preceq m_2\ {\rm if}\ \forall s,t: m_1(s,t)\geq
  m_2(s,t).\]
\end{definition}
Here we reverse the ordering with the purpose of
  characterizing bisimilarity as the {\em greatest} fixed point (cf:
  Corollary~\ref{c:bimx}).

\begin{lemma}\label{l:latt}
$(\CM,\preceq)$ is a complete lattice.
\end{lemma}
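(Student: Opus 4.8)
The plan is to verify directly that $(\CM,\preceq)$ satisfies the defining property of a complete lattice, namely that every subset $M\subseteq\CM$ has a least upper bound in $\CM$ with respect to $\preceq$. Since $\preceq$ reverses the pointwise numerical order, the least upper bound with respect to $\preceq$ is obtained by taking, at each pair $(s,t)$, the \emph{infimum} of the numerical values $\setof{m(s,t)}{m\in M}$. So I would set $m^\ast(s,t):=\inf\setof{m(s,t)}{m\in M}$, with the convention that $m^\ast$ is the constant $0$ pseudometric (the $\preceq$-largest element, since all metrics are $1$-bounded and nonnegative) when $M=\emptyset$, and then check two things: first that $m^\ast$ is again a $1$-bounded pseudometric, i.e.\ that $m^\ast\in\CM$; and second that $m^\ast$ is indeed the $\preceq$-least upper bound of $M$.

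For the first part, I would check the pseudometric axioms one at a time. Non-negativity and $1$-boundedness are immediate since each $m\in M$ takes values in $[0,1]$ and the infimum of a set of numbers in $[0,1]$ lies in $[0,1]$. The condition $m^\ast(s,s)=0$ holds because $m(s,s)=0$ for every $m\in M$. Symmetry $m^\ast(s,t)=m^\ast(t,s)$ is immediate from symmetry of each $m$. The only axiom needing a short argument is the triangle inequality: for any $s,t,u$ and any $m\in M$ we have $m(s,t)\le m(s,u)+m(u,t)$; taking infima over $m\in M$ on the right (and using that $\inf_m(a_m+b_m)\le \inf_m a_m+\inf_m b_m$ is \emph{not} what I want — rather I use that for each fixed $m$, $m^\ast(s,t)\le m(s,t)\le m(s,u)+m(u,t)$, and then take the infimum over $m$ of the right-hand side, noting $m^\ast(s,u)\le m(s,u)$ and $m^\ast(u,t)\le m(u,t)$ give $m(s,u)+m(u,t)\ge m^\ast(s,u)+m^\ast(u,t)$, hence $m^\ast(s,t)\le m^\ast(s,u)+m^\ast(u,t)$). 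This is the one step deserving care, though it is still routine.

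For the second part, I would observe that $m^\ast\preceq m$ for every $m\in M$ because $m^\ast(s,t)\le m(s,t)$ for all $s,t$ by definition of infimum; so $m^\ast$ is an upper bound of $M$ in $(\CM,\preceq)$. And if $m'$ is any other upper bound, i.e.\ $m'\preceq m$ for all $m\in M$, meaning $m'(s,t)\ge m(s,t)$ for all $m\in M$ and all $s,t$, then $m'(s,t)\ge\inf\setof{m(s,t)}{m\in M}=m^\ast(s,t)$, i.e.\ $m'\preceq m^\ast$. Hence $m^\ast$ is the least upper bound. Since a partial order in which every subset has a least upper bound is a complete lattice (greatest lower bounds then also exist automatically, given as least upper bounds of the sets of lower bounds), this completes the proof. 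I do not anticipate any real obstacle here; the proof is entirely routine, with the triangle inequality for the infimum being the only place where one must phrase the argument slightly carefully.
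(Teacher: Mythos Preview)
Your triangle-inequality step is incorrect, and in fact the whole approach fails: the pointwise infimum of a family of pseudometrics need \emph{not} be a pseudometric. Concretely, on $S=\{a,b,c\}$ take
\[
m_1(a,b)=0,\quad m_1(b,c)=1,\quad m_1(a,c)=1,\qquad
m_2(a,b)=1,\quad m_2(b,c)=0,\quad m_2(a,c)=1,
\]
extended symmetrically and with zero on the diagonal. Both $m_1$ and $m_2$ are $1$-bounded pseudometrics, yet the pointwise infimum $m^\ast$ has $m^\ast(a,b)=m^\ast(b,c)=0$ and $m^\ast(a,c)=1$, violating the triangle inequality. The flaw in your written argument is the last ``hence'': from $m^\ast(s,t)\le m(s,u)+m(u,t)$ for every $m$, together with $m(s,u)+m(u,t)\ge m^\ast(s,u)+m^\ast(u,t)$, nothing follows about the relation between $m^\ast(s,t)$ and $m^\ast(s,u)+m^\ast(u,t)$; you have $A\le B$ and $B\ge C$, which says nothing about $A$ versus $C$.

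The paper's proof avoids this trap by working on the other side: it constructs \emph{greatest lower bounds} in $(\CM,\preceq)$ directly, as pointwise \emph{suprema}, and then obtains least upper bounds by the general fact $\bigsqcup X=\bigsqcap\{m\in\CM\mid \forall m'\in X: m'\preceq m\}$. Pointwise suprema \emph{do} preserve the triangle inequality, since for any $m$ in the family one has $m(s,t)\le m(s,u)+m(u,t)\le \sup_{m'}m'(s,u)+\sup_{m'}m'(u,t)$, and taking the supremum over $m$ on the left yields the desired bound. Note also that the paper explicitly identifies the top element $\top\equiv 0$ and the bottom element $\bot(s,t)=1$ for $s\neq t$; the latter is needed so that the set of upper bounds in the formula for $\bigsqcup X$ is nonempty (and so that the empty supremum in $\bigsqcap\emptyset$ is well-defined). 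Your fix is simply to switch sides: build $\bigsqcap$ via pointwise $\sup$, verify it lies in $\CM$, and derive $\bigsqcup$ from it.
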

\begin{proof}
The top element is given by $\forall s,t:\top(s,t)=0$; the bottom
element is given by $\bot(s,t)=1$ if $s\not=t$, $0$ otherwise.
Greatest lower bounds are given by $(\bigsqcap
X)(s,t)=\sup\{m(s,t)\mid m\in X\}$ for any $X\subseteq\CM$. Finally,
least upper bounds are given by $\bigsqcup X=\bigsqcap\
\{m\in\CM\mid \forall m'\in X: m'\preceq m\}$.
\end{proof}

\begin{definition}\label{d:sm}
$m\in\CM$ is a {\em state-metric} if, for all
  $\epsilon\in [0,1)$, $m(s,t)\leq\epsilon$ implies:
\begin{itemize}
\item if $s\ar{a}\Delta$ then there exists some $\Delta'$ such that
  $t\ar{a}\Delta'$ and $\hat{m}(\Delta,\Delta')\leq\epsilon$
\end{itemize}
\end{definition}
where the lifted metric $\hat{m}$ was defined in (\ref{e:pc}) via
the Kantorovich metric. Note that if $m$ is a state-metric then it
is also a metric. By $m(s,t)\leq\epsilon$ we have
$m(t,s)\leq\epsilon$, which implies
\begin{itemize}
\item if $t\ar{a}\Delta'$ then there exists some $\Delta$ such that
  $s\ar{a}\Delta$ and $\hat{m}(\Delta',\Delta)\leq\epsilon$.
\end{itemize}
In the above definition, we prohibit $\epsilon$ to be $1$ because we
use $1$ to represent the distance between any two incomparable
states including the case where one state may perform a transition
and the other may not.

The greatest state-metric is defined as
\[m_{\it max}=\bigsqcup\{m\in\CM\mid m \mbox{ is a state-metric}\}.\]

It turns out that state-metrics correspond to bisimulations and the
greatest state-metric corresponds to bisimilarity. To make the
analogy closer, in what follows we will characterize $m_{\it max}$
as a fixed point of a suitable monotone function on $\CM$. First we
recall the definition of Hausdorff distance.
\begin{definition}
Given a $1$-bounded metric $d$ on $Z$, the {\em Hausdorff distance}
between two subsets $X,Y$ of $Z$ is defined as follows:
\[H_d(X,Y)=\max\{\sup_{x\in X}\inf_{y\in Y}d(x,y),\sup_{y\in Y}\inf_{x\in
  X}d(y,x)\}\]
where $\inf\ \emptyset=1$ and $\sup\ \emptyset=0$.
\end{definition}
Next we define a function $F$ on $\CM$ by using the Hausdorff
distance.

\begin{definition}
Let $der(s,a)=\{\Delta\mid s\ar{a}\Delta\}$. $F(m)$ is a
pseudometric given by:
\begin{displaymath}
F(m)(s,t)=\sup_{a\in\Act}\{H_{\hat{m}}(der(s,a),der(t,a))\}.
\end{displaymath}
\end{definition}

Thus we have the following property.

\begin{lemma}
For all $\epsilon\in [0,1)$, $F(m)(s,t)\leq\epsilon$ if and only if:
\begin{itemize}
\item if $s\ar{a}\Delta$ then there exists some $\Delta'$ such that
  $t\ar{a}\Delta'$ and $\hat{m}(\Delta,\Delta')\leq\epsilon$;
\item if $t\ar{a}\Delta'$ then there exists some $\Delta$ such that
  $s\ar{a}\Delta$ and $\hat{m}(\Delta',\Delta)\leq\epsilon$. \hfill\qed
\end{itemize}

\end{lemma}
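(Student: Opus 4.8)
The plan is to prove the equivalence by simply unfolding the definitions of $F$, the Hausdorff distance $H_{\hat m}$, and the derivative set $der(s,a)$, and then reconciling the various $\sup/\inf$ quantifiers with the explicit two-clause statement. The key observation is that $F(m)(s,t)\le\epsilon$ means $\sup_{a\in\Act}H_{\hat m}(der(s,a),der(t,a))\le\epsilon$, which (since the supremum over $a$ is bounded by $\epsilon$ exactly when each term is) is equivalent to $H_{\hat m}(der(s,a),der(t,a))\le\epsilon$ for every $a\in\Act$.

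Next I would expand the Hausdorff distance: $H_{\hat m}(der(s,a),der(t,a))\le\epsilon$ iff both $\sup_{\Delta\in der(s,a)}\inf_{\Delta'\in der(t,a)}\hat m(\Delta,\Delta')\le\epsilon$ and the symmetric expression with $s$ and $t$ swapped. I would then argue that $\sup_{\Delta\in der(s,a)}\inf_{\Delta'\in der(t,a)}\hat m(\Delta,\Delta')\le\epsilon$ is precisely the statement ``for every $\Delta$ with $s\ar{a}\Delta$ there exists $\Delta'$ with $t\ar{a}\Delta'$ and $\hat m(\Delta,\Delta')\le\epsilon$''. The forward direction of this sub-equivalence uses that the infimum is attained (or approximated): on a finitely branching pLTS $der(t,a)$ is finite, so $\inf$ is a $\min$ and $\inf \le \epsilon$ gives an actual witness $\Delta'$. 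The reverse direction is immediate. Doing this for both clauses and both states yields exactly the two bulleted items in the statement.

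The main subtlety — and the step I would be most careful about — concerns the empty-set conventions $\inf\emptyset=1$ and $\sup\emptyset=0$ built into the Hausdorff distance, together with the restriction $\epsilon\in[0,1)$. If $s$ can perform $a$ but $t$ cannot, then $der(t,a)=\emptyset$, so $\inf_{\Delta'\in der(t,a)}\hat m(\Delta,\Delta')=1$, and since $der(s,a)\ne\emptyset$ the outer $\sup$ is $1>\epsilon$; correspondingly the first bulleted clause fails because no matching $\Delta'$ exists. Conversely if $der(s,a)=\emptyset$ the $\sup$ over an empty set is $0\le\epsilon$, and the clause ``if $s\ar{a}\Delta$ then \dots'' is vacuously true. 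I would check that these boundary cases line up on both sides, which is exactly why $\epsilon$ is barred from being $1$; this makes the ``incomparable states get distance $1$'' phenomenon consistent with the iff. Once the empty-set bookkeeping is verified, the remaining argument is a routine chain of rewritings of nested quantifiers, so I would state it concisely rather than belabour it.
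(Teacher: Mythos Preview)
Your proposal is correct and matches the paper's approach: the paper gives no proof beyond the remark that the lemma ``can be proved by directly checking the definition of $F$'', and your argument does exactly that, unfolding the nested $\sup/\inf$ in the Hausdorff distance and handling the empty-set conventions. Your observation that finite branching is needed so that the infimum over $der(t,a)$ is actually attained (turning $\inf\le\epsilon$ into an existential witness) is a point the paper leaves implicit.
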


The above lemma can be proved by directly checking the definition of
$F$, as can the next lemma.
\begin{lemma}\label{l:smfix}
$m$ is a state-metric if and only if $m\preceq F(m)$. \hfill\qed
\end{lemma}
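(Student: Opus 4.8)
The plan is to prove the two directions separately by combining Definition~\ref{d:sm} (state-metric) with the preceding lemma characterising when $F(m)(s,t)\leq\epsilon$. Recall that $m\preceq F(m)$ means, by definition of $\preceq$, that $F(m)(s,t)\leq m(s,t)$ for all $s,t$. The key observation tying the two notions together is that the preceding lemma says $F(m)(s,t)\leq\epsilon$ (for $\epsilon\in[0,1)$) holds exactly when the symmetric transfer condition on transitions holds with bound $\epsilon$, and Definition~\ref{d:sm} is precisely the statement that this transfer condition holds with bound $\epsilon$ whenever $m(s,t)\leq\epsilon$.

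First I would prove the forward direction: assume $m$ is a state-metric and show $F(m)(s,t)\leq m(s,t)$ for arbitrary $s,t$. Set $\epsilon=m(s,t)$. If $\epsilon<1$, then trivially $m(s,t)\leq\epsilon$, so by Definition~\ref{d:sm} (and its symmetric consequence noted just after the definition, using that a state-metric is symmetric) both transfer clauses hold with bound $\epsilon$; by the preceding lemma this gives $F(m)(s,t)\leq\epsilon=m(s,t)$, as required. If $\epsilon=1$, then since $m$ is $1$-bounded and $F(m)$ is also $1$-bounded (it is a pseudometric in $\CM$), we have $F(m)(s,t)\leq 1=m(s,t)$ automatically. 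Hence $m\preceq F(m)$.

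For the converse, assume $m\preceq F(m)$, i.e.\ $F(m)(s,t)\leq m(s,t)$ for all $s,t$, and verify that $m$ satisfies Definition~\ref{d:sm}. Take any $\epsilon\in[0,1)$ and any $s,t$ with $m(s,t)\leq\epsilon$. Then $F(m)(s,t)\leq m(s,t)\leq\epsilon$, so by the preceding lemma the full symmetric transfer condition holds with bound $\epsilon$; in particular, if $s\ar{a}\Delta$ then there is $\Delta'$ with $t\ar{a}\Delta'$ and $\hat m(\Delta,\Delta')\leq\epsilon$, which is exactly the clause required of a state-metric. Thus $m$ is a state-metric.

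I do not expect any serious obstacle here; the lemma is essentially an unfolding of definitions once the preceding lemma is in hand. The only point requiring a little care is the boundary case $\epsilon=1$ in the forward direction — since Definition~\ref{d:sm} deliberately excludes $\epsilon=1$, one cannot invoke the state-metric property there and must instead fall back on $1$-boundedness of both $m$ and $F(m)$ to conclude $F(m)(s,t)\leq m(s,t)$ when $m(s,t)=1$. This is also why the statement is naturally phrased in $\CM$, the set of $1$-bounded pseudometrics.
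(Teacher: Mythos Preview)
Your proposal is correct and is exactly the ``direct checking of definitions'' that the paper intends: you unfold the preceding lemma's characterisation of $F(m)(s,t)\leq\epsilon$ against Definition~\ref{d:sm}, and you handle the $\epsilon=1$ boundary case via $1$-boundedness of $F(m)\in\CM$. There is nothing to add; the paper itself omits the proof for this reason.
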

Consequently we have the following characterisation:
\[m_{\it max}=\bigsqcup\{m\in\CM\mid m\preceq F(m)\}.\]

\begin{lemma}\label{l:mono}
$F$ is monotone on $\CM$. \hfill\qed
\end{lemma}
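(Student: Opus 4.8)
The plan is to show that $F$ preserves the order $\preceq$ on $\CM$, i.e. that $m_1 \preceq m_2$ implies $F(m_1) \preceq F(m_2)$. Unfolding the definition of $\preceq$, we must show: if $m_1(s,t) \geq m_2(s,t)$ for all $s,t$, then $F(m_1)(s,t) \geq F(m_2)(s,t)$ for all $s,t$. Since $F(m)(s,t)$ is a supremum over $a \in \Act$ of Hausdorff distances $H_{\hat m}(der(s,a), der(t,a))$, and since a supremum of pointwise-larger quantities is larger, it suffices to prove the single-action statement: $m_1 \preceq m_2$ implies $H_{\hat{m_1}}(der(s,a),der(t,a)) \geq H_{\hat{m_2}}(der(s,a),der(t,a))$ for every $a$.

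To get this, I would factor the argument into two monotonicity facts. First, the lifting operation on pseudometrics is antitone in the same sense: if $m_1 \preceq m_2$, i.e. $m_1(s,t) \geq m_2(s,t)$ pointwise, then $\hat{m_1}(\Delta,\Theta) \geq \hat{m_2}(\Delta,\Theta)$ for all distributions $\Delta,\Theta$. This follows directly from the linear-programming formulation of $\hat m$ in (\ref{e:pc1}): the objective $\sum_{s,t} y_{st} m(s,t)$ is, for any fixed feasible $y \geq 0$, monotone in the coefficients $m(s,t)$, and the feasible region (the coupling constraints with marginals $\Delta$ and $\Theta$) does not depend on $m$ at all; hence the minimum over the same feasible set can only increase when every coefficient increases. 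Second, the Hausdorff distance $H_d(X,Y)$ is monotone in its underlying metric $d$: if $d_1(x,y) \geq d_2(x,y)$ for all $x,y$, then $\sup_{x\in X}\inf_{y\in Y} d_1(x,y) \geq \sup_{x\in X}\inf_{y\in Y} d_2(x,y)$, because both $\inf$ and $\sup$ are monotone in the function being optimised (and the conventions $\inf\emptyset = 1$, $\sup\emptyset = 0$ do not disturb this), and the $\max$ of two pointwise-larger terms is larger. Chaining these, from $m_1 \preceq m_2$ we get $\hat{m_1} \geq \hat{m_2}$ pointwise on distributions, hence $H_{\hat{m_1}}(der(s,a),der(t,a)) \geq H_{\hat{m_2}}(der(s,a),der(t,a))$ for each $a$, hence $F(m_1)(s,t) \geq F(m_2)(s,t)$, which is exactly $F(m_1) \preceq F(m_2)$.

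The only genuinely delicate point, and the one I would spell out carefully, is the antitonicity of the Kantorovich lifting $m \mapsto \hat m$ via (\ref{e:pc1}). Here one must be sure to use the minimisation form of the linear program: because $m$ appears only in the objective coefficients and never in the constraints, increasing the coefficients increases the value of every feasible point, and therefore increases the optimum. (If one instead used the maximisation form (\ref{e:pc}), where $m$ appears on the right-hand side of the constraints $x_s - x_t \leq m(s,t)$, the argument would run through feasible-region inclusion instead — enlarging $m$ enlarges the feasible polytope, hence increases the max — but going through (\ref{e:pc1}) is cleaner.) Everything else is a routine unravelling of the definitions of $\preceq$, $\hat m$, $H_d$, and $F$, with the standard remarks that $\inf$, $\sup$, and $\max$ all respect pointwise inequalities. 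I therefore expect no real obstacle; the proof is short once the LP monotonicity observation is made explicit.
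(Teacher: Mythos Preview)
Your proposal is correct and is precisely the direct verification the paper has in mind: the paper gives no argument at all beyond remarking that the lemma ``can be proved by directly checking the definition of $F$,'' and your decomposition into monotonicity of the Kantorovich lifting (via the LP form (\ref{e:pc1})) followed by monotonicity of the Hausdorff distance and the outer supremum is exactly that check spelled out. There is nothing missing and no meaningful divergence from the paper's (omitted) proof.
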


Because of Lemma \ref{l:latt} and \ref{l:mono}, we can apply
Knaster-Tarski fixed point theorem, which tells us that $m_{\it
max}$ is the greatest fixed point of $F$. Furthermore, by
Lemma~\ref{l:smfix} we know that $m_{\it max}$ is indeed a
state-metric, and it is the greatest state-metric.

We now show the correspondence between state-metrics and
bisimulations.


\begin{theorem}\label{t:bm}
Given a binary relation $\aRel$ and a pseudometric $m\in\CM$ on a
finite state pLTS such that
\begin{equation}\label{e:rm}
m(s,t)=\left\{\begin{array}{ll}
0 & \mbox{if $s\aRel t$}\\
1 & \mbox{otherwise.}
\end{array}\right.
\end{equation}
Then $\aRel$ is a probabilistic bisimulation if and only if $m$ is a
state-metric.
\end{theorem}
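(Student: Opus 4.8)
The plan is to prove the two directions separately, in each case exploiting Theorem~\ref{t:metric.relation}, which already tells us that under the hypothesis \eqref{e:rm} we have $\Delta\lift{\aRel}\Theta$ iff $\hat m(\Delta,\Theta)=0$ (note that \eqref{e:rm} is a special case of the hypothesis \eqref{e:hy}, taking the pseudometric to be $\{0,1\}$-valued). The key observation that makes everything click is that, because $m$ is $1$-bounded and takes only values $0$ and $1$, the clause ``$m(s,t)\le\epsilon$ for all $\epsilon\in[0,1)$'' in Definition~\ref{d:sm} is equivalent to ``$m(s,t)=0$'', and likewise $\hat m(\Delta,\Delta')\le\epsilon$ for all $\epsilon\in[0,1)$ is equivalent to $\hat m(\Delta,\Delta')=0$; indeed $\hat m$ inherits $1$-boundedness, and the relevant condition only ever needs to be tested at ``distance $0$''. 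So the state-metric condition for such an $m$ collapses to: whenever $m(s,t)=0$ and $s\ar a\Delta$, there is $\Delta'$ with $t\ar a\Delta'$ and $\hat m(\Delta,\Delta')=0$.

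For the forward direction, assume $\aRel$ is a probabilistic bisimulation. To show $m$ is a state-metric, take $s,t$ with $m(s,t)=0$, i.e.\ $s\aRel t$, and a transition $s\ar a\Delta$. Since $\aRel$ is a bisimulation there is $\Theta$ with $t\ar a\Theta$ and $\Delta\lift{\aRel}\Theta$; by Theorem~\ref{t:metric.relation} this gives $\hat m(\Delta,\Theta)=0$, which is exactly what the (collapsed) state-metric condition demands. The symmetric clause, that transitions of $t$ are matched by transitions of $s$, follows identically from $\aRel^{-1}$ being a bisimulation (equivalently, from the symmetry noted after Definition~\ref{d:sm}). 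Hence $m$ is a state-metric.

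For the converse, assume $m$ is a state-metric. We show $\aRel$ is a probabilistic bisimulation. Suppose $s\aRel t$ and $s\ar a\Delta$. Then $m(s,t)=0$ by \eqref{e:rm}, so the state-metric condition yields $\Delta'$ with $t\ar a\Delta'$ and $\hat m(\Delta,\Delta')\le\epsilon$ for every $\epsilon\in[0,1)$, hence $\hat m(\Delta,\Delta')=0$; by Theorem~\ref{t:metric.relation}, $\Delta\lift{\aRel}\Delta'$. Thus $\aRel$ is a probabilistic simulation, and by the symmetry built into the definition of state-metric ($m(s,t)=m(t,s)$, so the matching condition also runs from $t$ to $s$) the same argument shows $\aRel^{-1}$ is a probabilistic simulation; therefore $\aRel$ is a bisimulation.

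The only real subtlety, and the step I would write most carefully, is the passage between the ``$\forall\epsilon\in[0,1)$'' formulation and the ``$=0$'' formulation, both for $m$ and for $\hat m$: one must check that $\hat m$ is indeed $1$-bounded whenever $m$ is (so that a $\le\epsilon$ bound with $\epsilon$ arbitrarily close to $1$ is genuinely vacuous unless the distance is $0$), and that no issue arises from $\epsilon$ never being allowed to equal $1$ — which is fine precisely because for a $\{0,1\}$-valued underlying metric the lifted distance $\hat m(\Delta,\Delta')$ is $0$ exactly when $\Delta\lift\aRel\Delta'$ and is otherwise bounded below by a positive quantity, so it can be driven below any $\epsilon<1$ iff it is actually $0$. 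Everything else is a direct translation through Theorem~\ref{t:metric.relation}.
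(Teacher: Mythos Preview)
Your argument is correct, and in fact the paper's proof opens by acknowledging precisely this route: ``The result can be proved by using Theorem~\ref{t:metric.relation} \ldots\ Below we give an alternative proof that uses Theorem~\ref{t:lifting.alternative}~(2) instead.'' So you have supplied the proof the paper alludes to but does not spell out, while the paper's detailed argument takes a different path: it first observes that (\ref{e:rm}) together with $m$ being a pseudometric forces $\aRel$ to be an equivalence relation, then rewrites the Kantorovich LP (\ref{e:pc}) in terms of the equivalence classes $V_1,\dots,V_n$, obtaining the objective $\sum_i(\Delta(V_i)-\Delta'(V_i))x_i$, and finally argues directly that this is zero iff $\Delta(V_i)=\Delta'(V_i)$ for all $i$, which by Theorem~\ref{t:lifting.alternative}~(2) is exactly $\Delta\lift{\aRel}\Delta'$. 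Your approach is shorter and more conceptual (it delegates the LP analysis entirely to Theorem~\ref{t:metric.relation}); the paper's approach is more self-contained and makes the role of the equivalence-class structure explicit.

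One remark on your ``subtlety'' paragraph: you are working harder than necessary, and parts of the informal justification are not quite right (for a $\{0,1\}$-valued $m$ the lifted metric $\hat m$ is \emph{not} $\{0,1\}$-valued in general, so the claim that $\hat m(\Delta,\Delta')$ is ``otherwise bounded below by a positive quantity'' independent of $\epsilon$ is off). The clean way to collapse Definition~\ref{d:sm} is simply to instantiate $\epsilon=0$: since $0\in[0,1)$, the state-metric clause at $\epsilon=0$ reads ``$m(s,t)=0$ and $s\ar a\Delta$ imply some $t\ar a\Delta'$ with $\hat m(\Delta,\Delta')=0$'', which is exactly what you need for the converse direction; and for the forward direction, $m(s,t)\le\epsilon$ with $\epsilon<1$ forces $m(s,t)=0$, and the matching $\Theta$ you produce has $\hat m(\Delta,\Theta)=0\le\epsilon$. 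No limiting argument or finite-branching consideration is required.
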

\begin{proof}
The result can be proved by using Theorem~\ref{t:metric.relation},
which in turn relies on Theorem~\ref{t:lifting.alternative} (1).
Below we give an alternative proof that uses
Theorem~\ref{t:lifting.alternative} (2) instead.

Given two distributions $\Delta,\Delta'$ over $S$, let us consider
how to compute $\hat{m}(\Delta,\Delta')$ if $\aRel$ is an
equivalence relation. Since $S$ is finite, we may assume that
$V_1,...,V_n\in S/{\cal R}$ are all the equivalence classes of $S$
under $\aRel$. If $s,t\in V_i$ for some $i\in 1..n$, then
$m(s,t)=0$, which implies $x_s=x_t$ by the first constraint of
(\ref{e:pc}). So for each $i\in 1..n$ there exists some $x_i$ such
that $x_i=x_s$ for all $s\in V_i$.
 Thus, some
summands of (\ref{e:pc}) can be grouped together and we have the
following linear program:
\begin{equation}\label{eq:b}
\sum_{i\in 1..n}(\Delta(V_i)-\Delta'(V_i))x_{i}
\end{equation}
with the constraint $x_{i}-x_{j}\leq 1$ for any $i,j\in 1..n$ with
$i\not=j$. Briefly speaking, if $\aRel$ is an equivalence relation
then $\hat{m}(\Delta,\Delta')$ is obtained by maximizing the linear
program (\ref{eq:b}).

($\Rightarrow$) Suppose $\aRel$ is a bisimulation and $m(s,t)=0$.
From the assumption in (\ref{e:rm}) we know that $\aRel$ is an
equivalence relation.
 By the definition of
$m$ we have $s\aRel t$. If $s\ar{a}\Delta$ then $t\ar{a}\Delta'$ for
some $\Delta'$ such that $\Delta\lift{\aRel}\Delta'$. To show that
$m$ is a state-metric it suffices to prove $m(\Delta,\Delta')=0$. We
know from $\Delta\lift{\aRel}\Delta'$  and
Theorem~\ref{t:lifting.alternative} (2) that
$\Delta(V_i)=\Delta'(V_i)$, for each $i\in 1..n$. It follows that
(\ref{eq:b}) is maximized to be $0$, thus
$\hat{m}(\Delta,\Delta')=0$.

($\Leftarrow$) Suppose $m$ is a state-metric and has the relation in
(\ref{e:rm}).
 Notice that $\aRel$ is
an equivalence relation. We show that it is a bisimulation. Suppose
$s\aRel t$, which means $m(s,t)=0$. If $s\ar{a}\Delta$ then
$t\ar{a}\Delta'$ for some $\Delta'$ such that
$\hat{m}(\Delta,\Delta')=0$. To ensure that
$\hat{m}(\Delta,\Delta')=0$, in (\ref{eq:b}) the following two
conditions must be satisfied.
\begin{enumerate}
\item No coefficient is positive. Otherwise, if
  $\Delta(V_i)-\Delta'(V_i)>0$ then (\ref{eq:b}) would be
  maximized to a value not less than
  $(\Delta(V_i)-\Delta'(V_i))$, which is greater than $0$.
\item It is not the case that at least one coefficient is negative and the
  other coefficients are either negative or $0$. Otherwise, by summing up
   all the coefficients, we would get
  \[\Delta(S)-\Delta'(S)<0\]
which contradicts the assumption that $\Delta$ and $\Delta'$ are
distributions over $S$.
\end{enumerate}

Therefore the only possibility is that all coefficients in
(\ref{eq:b}) are $0$, i.e., $\Delta(V_i)=\Delta'(V_i)$ for any
equivalence class $V_i\in S/\aRel$. It follows from
Theorem~\ref{t:lifting.alternative} (2) that
$\Delta\lift{\aRel}\Delta'$. So we have shown that $\aRel$ is indeed
a bisimulation.
\end{proof}

\begin{corollary}\label{c:bimx}
Let $s$ and $t$ be two states in a finite state pLTS. Then $s\BISI
t$ if and only if $m_{\it max}(s,t)=0$.
\end{corollary}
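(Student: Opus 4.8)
The plan is to derive Corollary~\ref{c:bimx} directly from Theorem~\ref{t:bm} together with the fixed-point characterisation of $m_{\it max}$ already established. The key observation is that bisimilarity $\BISI$ is itself a probabilistic bisimulation (in fact the largest one), and it is an equivalence relation, so it falls within the scope of Theorem~\ref{t:bm}.

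First I would define the pseudometric $m_\BISI\in\CM$ associated with $\BISI$ by the recipe of equation~(\ref{e:rm}), namely $m_\BISI(s,t)=0$ if $s\BISI t$ and $m_\BISI(s,t)=1$ otherwise. Since $\BISI$ is an equivalence relation, $m_\BISI$ is a genuine $1$-bounded pseudometric (symmetry and the triangle inequality hold because the relation is transitive and symmetric), so this is a legitimate element of $\CM$. By Theorem~\ref{t:bm}, since $\BISI$ is a probabilistic bisimulation, $m_\BISI$ is a state-metric. Hence $m_\BISI\preceq m_{\it max}$ by the definition of $m_{\it max}$ as the least upper bound of all state-metrics. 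Unfolding $\preceq$, this gives $m_{\it max}(s,t)\leq m_\BISI(s,t)$ for all $s,t$; in particular $s\BISI t$ implies $m_\BISI(s,t)=0$ implies $m_{\it max}(s,t)=0$.

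For the converse, suppose $m_{\it max}(s,t)=0$. The idea is to extract a bisimulation from $m_{\it max}$ by taking its kernel. Define $\aRel:=\setof{(s,t)}{m_{\it max}(s,t)=0}$. Because $m_{\it max}$ is a pseudometric, $\aRel$ is reflexive, symmetric, and transitive, i.e.\ an equivalence relation, and by construction it satisfies the relationship~(\ref{e:rm}) with the pseudometric $m'$ defined by $m'(s,t)=0$ if $(s,t)\in\aRel$ and $1$ otherwise. The slightly delicate point is that $m'$ is not literally $m_{\it max}$ — but we do not need that. We only need that $m_{\it max}$ being a state-metric forces $\aRel$ to be a bisimulation. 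Concretely: since $m_{\it max}$ is a state-metric (as noted after Lemma~\ref{l:mono}), whenever $m_{\it max}(s,t)=0\leq\epsilon$ for every $\epsilon\in[0,1)$ and $s\ar{a}\Delta$, there is $\Delta'$ with $t\ar{a}\Delta'$ and $\hat{m}_{\it max}(\Delta,\Delta')\leq\epsilon$ for all such $\epsilon$, hence $\hat{m}_{\it max}(\Delta,\Delta')=0$. One then shows $\Delta\lift{\aRel}\Delta'$: using the weight-function formulation (Theorem~\ref{t:lifting.alternative}(1)) an optimal solution $w$ of the programme~(\ref{e:pc1}) for $\hat{m}_{\it max}(\Delta,\Delta')=0$ must have $w(s',t')>0$ only when $m_{\it max}(s',t')=0$, i.e.\ only when $s'\aRel t'$, which is exactly condition (c) of a weight function for $\aRel$. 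Thus $\aRel$ is a probabilistic bisimulation, so $\aRel\subseteq\BISI$, and from $m_{\it max}(s,t)=0$ we get $s\aRel t$ and therefore $s\BISI t$.

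The main obstacle I anticipate is the second direction, specifically the passage from $\hat{m}_{\it max}(\Delta,\Delta')=0$ to $\Delta\lift{\aRel}\Delta'$ where $\aRel$ is the kernel of $m_{\it max}$ rather than an arbitrary relation satisfying~(\ref{e:rm}) exactly. Theorem~\ref{t:metric.relation} is stated for a relation $R$ and metric $m$ with $sRt\iff m(s,t)=0$; here one should apply it with $R=\aRel$ and $m=m_{\it max}$, for which the hypothesis $s\aRel t\iff m_{\it max}(s,t)=0$ holds by the very definition of $\aRel$. With that instance of Theorem~\ref{t:metric.relation} in hand, $\hat{m}_{\it max}(\Delta,\Delta')=0$ yields $\Delta\lift{\aRel}\Delta'$ immediately, and the argument closes cleanly. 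So the real content is just recognising that Theorem~\ref{t:metric.relation} applies verbatim to the kernel of $m_{\it max}$; everything else is bookkeeping with the lattice order and the definition of state-metric.
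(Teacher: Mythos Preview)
Your proposal is correct and follows essentially the same plan as the paper: for $(\Rightarrow)$ you build the $0$--$1$ pseudometric $m_{\BISI}$ from $\BISI$, invoke Theorem~\ref{t:bm} to see it is a state-metric, and compare with $m_{\it max}$; for $(\Leftarrow)$ you take the kernel $\aRel=\setof{(s,t)}{m_{\it max}(s,t)=0}$ and show it is a bisimulation.

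The only organisational difference is in $(\Leftarrow)$. The paper first collapses $m_{\it max}$ to the $0$--$1$ pseudometric $m'$ (your $m'$), argues that $m'$ is again a state-metric, and then applies Theorem~\ref{t:bm} to the pair $(\aRel,m')$. You instead apply Theorem~\ref{t:metric.relation} directly to the pair $(\aRel,m_{\it max})$, bypassing the intermediate $m'$. Both routes rest on the same fact---a transport achieving $\hat{m}_{\it max}(\Delta,\Delta')=0$ can only put mass on pairs with $m_{\it max}$-distance $0$---so the content is identical; your packaging is marginally more direct since it avoids re-checking the state-metric condition for $m'$.
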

\begin{proof}
($\Rightarrow$) Since $\BISI$ is a bisimulation, by
Theorem~\ref{t:bm} there exists some state-metric $m$ such that
$s\BISI t$ iff $m(s,t)=0$. By the definition of $m_{\it max}$ we
have $m\preceq m_{\it max}$. Therefore $m_{\it max}(s,t)\leq
m(s,t)=0$.

($\Leftarrow$) From $m_{\it max}$ we construct a pseudometric $m$ as
follows.
\[m(s,t)=\left\{\begin{array}{ll}
                 0 \quad& \mbox{if $m_{\it max}(s,t)=0$}\\
                 1 & \mbox{otherwise.}
                \end{array}\right.\]
Since $m_{\it max}$ is a state-metric, it is easy to see that $m$ is
also a state-metric. Now we construct a binary relation $\aRel$ such
that $\forall s,s': s\aRel s'$ iff $m(s,s')=0$. If follows from
Theorem~\ref{t:bm} that $\aRel$ is a bisimulation. If $m_{\it
  max}(s,t)=0$, then $m(s,t)=0$ and thus $s\aRel t$. Therefore we have
the required result $s\BISI t$ because $\BISI$ is the largest
bisimulation.
\end{proof}

\section{Algorithmic characterisation}\label{s:algo}
In this section we propose an ``on the fly" algorithm for checking
if two states in a finitary pLTS are  bisimilar.

An important ingredient of the algorithm is to check if two
distributions are related by a lifted relation. Fortunately,
Theorem~\ref{t:lift.flow} already provides us a method for deciding
whether $\Delta\lift{\aRel}\Theta$, for two given distributions
$\Delta,\Theta$ and a relation $\aRel$. We construct the network
$\CN(\Delta,\Theta,\aRel)$ and compute the maximum flow with
well-known methods, as sketched in
Algorithm 1.

\begin{algorithm}
\caption{\textbf{Check}$(\Delta,\Theta,\aRel)$}
\begin{tabular}{l}
\emph{Input}: A nonempty finite set $S$, distributions\\
\qquad $\Delta,\Theta\in\dist{S}$
and $\aRel\subseteq S\times S$\\
\emph{Output}: If $\Delta\lift{\aRel}\Theta$ then ``yes'' else ``no''\\
\emph{Method}:\\
\qquad Construct the network $\CN(\Delta,\Theta,\aRel)$\\
\qquad Compute the maximum flow $F$ in $\CN(\Delta,\Theta,\aRel)$\\
\qquad If $F<1$ then return ``no'' else ``yes''.
\end{tabular}
\end{algorithm}
As shown in \cite{CHM90}, computing the maximum flow in a network
can be done in time $O(n^3/\log n)$ and space $O(n^2)$, where $n$ is
the number of nodes in the network. So we immediately have the
following result.
\begin{lemma}\label{l:check.lift}
The test whether $\Delta\lift{\aRel}\Theta$ can be done in time $
O(n^3/\log n)$ and space $O(n^2)$. \hfill\qed
\end{lemma}

\begin{algorithm}
\caption{\textbf{Bisim}$(s,t)$}
\parbox[c]{7.8cm}{
\begin{algorithmic}
\STATE \textbf{Bisim}$(s,t)=\{$ \STATE $NotBisim:=\sset{}$ \STATE
\textbf{fun} \textbf{Bis}$(s,t)$=\{ \STATE \qquad $Visited:=\sset{}$
\STATE \qquad $Assumed:=\sset{}$ \STATE \qquad
\textbf{Match}$(s,t)$\} \STATE\} \textbf{handle}
$WrongAssumption\Rightarrow\textbf{Bis}(s,t)$ \STATE \textbf{return}
\textbf{Bis}$(s,t)$

\bigskip
\STATE \textbf{Match}$(s,t)=$ \STATE
$Visited:=Visisted\cup\sset{(s,t)}$ \STATE $b=\bigwedge_{a\in
A}\textbf{MatchAction}(s,t,a)$ \IF{$b=false$} \STATE $NotBisim :=
NotBisim\cup\sset{(s,t)}$ \IF{$(s,t)\in Assumed$} \STATE
\textbf{raise} $WrongAssumption$ \ENDIF \ENDIF \STATE
\textbf{return} $b$

\bigskip
\STATE \textbf{MatchAction}$(s,t,a)=$ \FORALL{$s\ar{a}\Delta_i$}
\FORALL{$t\ar{a}\Theta_j$} \STATE
$b_{ij}=\textbf{MatchDistribution}(\Delta_i,\Theta_j)$ \ENDFOR
\ENDFOR \STATE \textbf{return} $(\bigwedge_i(\bigvee_j
b_{ij}))\underline{\wedge(\bigwedge_j(\bigvee_i b_{ij}))}$

\bigskip
\STATE\textbf{MatchDistribution}$(\Delta,\Theta)=$ \STATE Assume
$\support{\Delta}=\sset{s_1,...,s_n}$ and
$\support{\Theta}=\sset{t_1,...,t_m}$ \STATE
$\aRel:=\sset{(s_i,t_j)\mid \textbf{Close}(s_i,t_j)=true}$ \STATE
\textbf{return} \textbf{Check}$(\Delta,\Theta,\aRel)$

\bigskip
\STATE\textbf{Close}$(s,t)=$ \IF{$(s,t)\in NotBisim$} \STATE
\textbf{return} $false$ \ELSIF{$(s,t)\in Visited$} \STATE
$Assumed:=Assumed\cup\sset{(s,t)}$ \STATE \textbf{return} $true$
\ELSE \STATE \textbf{return} \textbf{Match}$(s,t)$ \ENDIF
\end{algorithmic}
}
\end{algorithm}

We now present a bisimilarity-checking algorithm by adapting the
algorithm proposed in \cite{Lin98} for value-passing processes,
which in turn was inspired by \cite{FM90}.

The main procedure in the algorithm is \textbf{Bisim}$(s,t)$. It
starts with the initial state pair $(s,t)$, trying to find the
smallest bisimulation relation containing the pair by matching
transitions from each pair of states it reaches.  It uses three
auxiliary data structures:
\begin{itemize}
\item $NotBisim$ collects all state pairs that have already been
  detected as not bisimilar.
\item $Visited$ collects all state pairs that have already been
  visited.
\item $Assumed$ collects all state pairs that have already been
  visited and assumed to be bisimilar.
\end{itemize}
The core procedure, \textbf{Match}, is called from function
\textbf{Bis} inside the main procedure \textbf{Bisim}. Whenever a
new pair of states is encountered it is inserted into $Visited$. If
two states fail to match each other's transitions then they are not
bisimilar and the pair is added to $NotBisim$. If the current state
pair has been visited before, we check whether it is in $NotBisim$.
If this is the case, we return $false$. Otherwise, a loop has been
detected and we make assumption that the two states are bisimilar,
by inserting the pair into $Assumed$, and return $true$. Later on,
if we find that the two states are not bisimilar after finishing
searching the loop, then the assumption is wrong, so we first add
the pair into $NotBisim$ and then raise the exception
$WrongAssumption$, which forces the function \textbf{Bis} to run
again, with the new information that the two states in this pair are
not bisimilar. In this case, the size of $NotBisim$ has been
increased by at least one. Hence, \textbf{Bis} can only be called
for finitely many times. Therefore, the procedure
\textbf{Bisim}$(s,t)$ will terminate. If it returns $true$, then the
set $(Visited - NotBisim)$ constitutes a bisimulation relation
containing the pair $(s,t)$.

The main difference from the algorithm of checking non-probabilistic
bisimilarity in \cite{Lin98} is the introduction of the procedure
\textbf{MatchDistribution}$(\Delta,\Theta)$, where we approximate
$\BISI$ by a  binary relation $\aRel$ which is coarser than $\BISI$
in general, and we check the validity of $\Delta\lift{\aRel}\Theta$.
If $\Delta\lift{\aRel}\Theta$ does not hold, then
$\Delta\lift{\BISI}\Theta$ is invalid either and
\textbf{MatchDistribution}$(\Delta,\Theta)$ returns \textit{false}
correctly. Otherwise, the two distributions $\Delta$ and $\Theta$
are considered equivalent with respect to $\aRel$ and we move on to
match other pairs of distributions. The correctness of the algorithm
is stated in the following theorem.

\begin{theorem}\label{t:correctness}
Given two states $s_0$ and $t_0$ in a finitary pLTS, the function
\textbf{Bisim}$(s_0,t_0)$ terminates, and it returns \emph{true} if
and only if $s_0\sim t_0$.
\end{theorem}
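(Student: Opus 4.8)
The plan is to prove termination and partial correctness separately, since the theorem makes two assertions: that $\textbf{Bisim}(s_0,t_0)$ halts, and that its return value coincides with $s_0\sim t_0$.

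For termination, I would argue as follows. Each call to $\textbf{Bis}(s,t)$ either returns normally or raises $WrongAssumption$. The key observation is that $WrongAssumption$ is raised only after a \emph{new} pair has been added to the global set $NotBisim$ (the set $NotBisim$ persists across re-invocations of $\textbf{Bis}$, since it is declared in the outer $\textbf{Bisim}$ scope). Since the pLTS is finitary, $S\times S$ is finite, so $NotBisim$ can grow only finitely many times; hence $\textbf{Bis}$ is re-invoked only finitely often. It then remains to see that a single invocation of $\textbf{Bis}(s,t)$, i.e.\ a single top-level call to $\textbf{Match}(s,t)$ with $Visited$ and $Assumed$ freshly reset, terminates. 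This follows because $\textbf{Match}$ inserts $(s,t)$ into $Visited$ immediately on entry, and $\textbf{Close}$ (the only path by which $\textbf{Match}$ is re-entered, through $\textbf{MatchDistribution}$ and $\textbf{MatchAction}$) calls $\textbf{Match}(s,t)$ only when $(s,t)\notin Visited$; therefore the recursion tree of $\textbf{Match}$ has at most $|S\times S|$ distinct pairs along any branch, and since the pLTS is finitely branching, each state has finitely many outgoing transitions and each distribution finite support, so the branching of the recursion is finite. By König's lemma the recursion is finite. Finally, $\textbf{Check}$ terminates because it is a finite maximum-flow computation (Lemma~\ref{l:check.lift}).

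For partial correctness I would establish two directions, phrased in terms of the invariant maintained on $NotBisim$ and $Visited$. First, a soundness invariant: at every point during execution, every pair in $NotBisim$ is genuinely \emph{not} bisimilar. This is proved by induction on the order in which pairs are added to $NotBisim$: a pair $(s,t)$ is added only when $\textbf{Match}(s,t)$ computes $b=false$, meaning for some action $a$ either some transition $s\ar{a}\Delta_i$ has no match among the $t\ar{a}\Theta_j$, or vice versa; ``no match'' means for every candidate $\Theta_j$, $\textbf{Check}(\Delta_i,\Theta_j,\aRel)$ returned ``no'', i.e.\ $\neg(\Delta_i\lift{\aRel}\Theta_j)$ where $\aRel=\{(s_i,t_j)\mid \textbf{Close}(s_i,t_j)=true\}$. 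Here one needs the monotonicity fact that $\aRel$ is \emph{coarser} than $\sim$ relative to the pairs already known, so that $\neg(\Delta_i\lift{\aRel}\Theta_j)$ implies $\neg(\Delta_i\lift{\sim}\Theta_j)$; combined with the earlier-added pairs of $NotBisim$ being truly non-bisimilar (induction hypothesis) and the semantics of Definition~\ref{d:sbisi2}, this shows $s\not\sim t$. The delicate point is that $\textbf{Close}$ may return $true$ on an \emph{assumed} pair that is later refuted; but in that event $WrongAssumption$ is raised and the whole computation of $\textbf{Bis}$ is discarded, so no unsound conclusion about $NotBisim$ survives — the pair that triggered the contradiction has itself just been added to $NotBisim$ with a valid justification.

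Second, a completeness invariant: if $\textbf{Bisim}(s_0,t_0)$ returns $true$, then $R:=Visited\setminus NotBisim$ (from the final, successful run of $\textbf{Bis}$) is a probabilistic bisimulation containing $(s_0,t_0)$. One checks the bisimulation clause directly: for $(s,t)\in R$, the pair was $Visited$ and $\textbf{Match}(s,t)$ ultimately returned $true$ (otherwise it would be in $NotBisim$), so for each $a$ and each $s\ar{a}\Delta_i$ there is some $t\ar{a}\Theta_j$ with $\textbf{Check}(\Delta_i,\Theta_j,\aRel)=$ ``yes'', i.e.\ $\Delta_i\lift{\aRel}\Theta_j$ where $\aRel=\{(s_i,t_j)\mid\textbf{Close}(s_i,t_j)=true\}$; and one argues $\aRel\subseteq R$ on the relevant support pairs, because $\textbf{Close}$ returns $true$ only by landing in $Assumed\subseteq Visited$ with the pair not in $NotBisim$, or by a recursive $\textbf{Match}$ that returned $true$ — in the successful run no $WrongAssumption$ fires, so every such assumption is vindicated. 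Hence $\Delta_i\lift{R}\Theta_j$, and symmetrically for $t$'s transitions, so $R$ satisfies Definition~\ref{d:sbisi2}; since $\sim$ is the largest bisimulation, $s_0\sim t_0$. Conversely, if $\textbf{Bisim}(s_0,t_0)$ returns $false$, then $(s_0,t_0)\in NotBisim$, which by soundness gives $s_0\not\sim t_0$.

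I expect the main obstacle to be the completeness direction, specifically handling the interplay between the $Assumed$ set and the exception mechanism: one must argue carefully that in a \emph{terminating successful} run, every pair on which $\textbf{Close}$ returned $true$ by the ``assumed bisimilar'' branch does in fact end up in $Visited\setminus NotBisim$, so that the assumptions used to validate $\Delta_i\lift{\aRel}\Theta_j$ are all legitimate members of the candidate relation $R$. This requires noting that the final run raises no $WrongAssumption$ (else it would not be the final run), and that $NotBisim$ is only ever added to — never removed from — so a pair assumed bisimilar in the final run and never refuted there is genuinely in $R$. A secondary subtlety is the coarseness claim $\aRel\supseteq{\sim}$ restricted to support pairs, needed for soundness of $\textbf{MatchDistribution}$: one must verify that $\textbf{Close}(s_i,t_j)$ never returns $false$ on a genuinely bisimilar pair, which again reduces to the soundness invariant on $NotBisim$ established by the induction above.
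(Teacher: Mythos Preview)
Your proposal is correct and follows the same skeleton as the paper's proof: termination via the strict growth of $NotBisim$ across exception-triggered restarts, soundness of $NotBisim$ (pairs in it are genuinely non-bisimilar), and completeness by exhibiting $R=Visited\setminus NotBisim$ from the final successful run as a bisimulation containing $(s_0,t_0)$. Your treatment is in fact more careful than the paper's on several points the paper leaves implicit: termination of a single invocation of $\textbf{Bis}$, the inductive soundness invariant on $NotBisim$, and the argument that on an exception-free run the local relation $\aRel$ assembled inside $\textbf{MatchDistribution}$ is contained in the final $R$.

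One noteworthy divergence: for the ``returns \emph{true}'' direction the paper, after asserting that $\aRel_k$ is a bisimulation, additionally appeals to Proposition~\ref{p:app} (the coincidence $\sim\;=\;\sim_\omega$ on finitely branching systems) to argue by contradiction that $s_0\not\sim t_0$ would force $s_0\not\sim_n t_0$ for some $n$ and hence a distinguishing transition. Your direct verification that $R$ satisfies Definition~\ref{d:sbisi2} makes that detour through the approximants unnecessary; once $R$ is shown to be a bisimulation containing $(s_0,t_0)$, $s_0\sim t_0$ is immediate. Your route is the more elementary one and avoids an extra dependency on the approximation result.
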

\begin{proof}
Let $\textbf{Bis}_i$ stand for the $i$-th execution of the function
\textbf{Bis}. Let $Assumed_i$ and $NotBisim_i$ be the set $Assumed$
and $NotBisim$ at the end of \textbf{Bis}$_i$. When \textbf{Bis}$_i$
is finished, either a $WrongAssumption$ is raised or no
$WrongAssumption$ is raised. In the former case, $Assumed_i\cap
NotBisim_i\not=\emptyset$; in the latter case, the execution of the
function \textbf{Bisim} is completed. From function \textbf{Close}
we know that $Assumed_i\cap NotBisim_{i-1}=\emptyset$. Now it
follows from the simple fact $NotBisim_{i-1}\subseteq NotBisim_i$
that $NotBisim_{i-1}\subset NotBisim_i$. Since we are considering
finitary pLTSs, there is some $j$ such that
$NotBisim_{j-1}=NotBisim_j$, when all the non-bisimilar state pairs
reachable from $s_0$ and $t_0$ have been found and \textbf{Bisim}
must terminate.

For the correctness of the algorithm, we consider the relation
$\aRel_i=Visited_i-NotBisim_i$, where $Visited_i$ is the set
$Visited$ at the end of \textbf{Bis}$_i$. Let \textbf{Bis}$_k$ be
the last execution of \textbf{Bis}. For each $i\leq k$, the relation
$\aRel_i$ can be regarded as an approximation of $\BISI$, as  far as
the states appeared in $\aRel_i$ are concerned. Moreover, $\aRel_i$
is a coarser approximation because if two states $s,t$ are
re-visited but their relation is unknown, they are assumed to be
bisimilar. Therefore, if \textbf{Bis}$_k(s_0,t_0)$ returns $false$,
then $s_0\not\BISI t_0$. On the other hand, if
\textbf{Bis}$_k(s_0,t_0)$ returns $true$, then $\aRel_k$ constitutes
a bisimulation relation containing the pair $(s_0,t_0)$. This
follows because $\textbf{Match}(s_0,t_0)=true$  which basically
means that whenever $s\aRel_k t$ and $s\ar{a}\Delta$ there exists
some transition $t\ar{a}\Theta$ such that
$\textbf{Check}(\Delta,\Theta,\aRel_k)=true$, i.e.
$\Delta\lift{\aRel_k}\Theta$. Indeed, this rules out the possibility
that $s_0\not\BISI t_0$ as otherwise we would have
$s_0\not\BISI_\omega t_0$ by Proposition~\ref{p:app}, that is
$s_0\not\BISI_n t_0$ for some $n>0$. The latter means that some
transition $s_0\ar{a}\Delta$ exists such that for all
$t_0\ar{a}\Theta$ we have $\Delta\not\lift{\BISI_{n-1}}\Theta$, or
symmetrically with the roles of $s_0$ and $t_0$ exchanged, i.e.
$\Delta$ and $\Theta$ can be distinguished at level $n$, so a
contradiction arises.
\end{proof}

Below we consider the time and space complexities of the algorithm.
\begin{theorem}\label{t:complexity}
Let $s$ and $t$ be two states in a pLTS with $n$ states in total.
The function $\textbf{Bisim}(s,t)$ terminates in time $O(n^7/\log
n)$ and space $O(n^2)$.
\end{theorem}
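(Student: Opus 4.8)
The plan is to account separately for the cost of one call to \textbf{Bis}, the number of times \textbf{Bis} can be invoked, and the cost of the maximum-flow computations buried inside \textbf{MatchDistribution}, and then multiply these together. The space bound is easier and I would dispose of it first: at any moment the data structures $Visited$, $Assumed$, $NotBisim$ are sets of state pairs, hence of size $O(n^2)$, and each network $\CN(\Delta,\Theta,\aRel)$ built inside \textbf{Check} has $O(n)$ nodes, so by Lemma~\ref{l:check.lift} it occupies $O(n^2)$ space, which is not exceeded by the recursion stack (depth $O(n^2)$, each frame of constant size). Hence the overall space is $O(n^2)$.

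For the time bound I would argue as follows. First, by the termination analysis in the proof of Theorem~\ref{t:correctness}, each re-execution of \textbf{Bis} strictly enlarges $NotBisim$, a subset of $S\times S$; hence \textbf{Bis} is called at most $O(n^2)$ times. Second, within a single execution of \textbf{Bis}, the procedure \textbf{Match}$(s,t)$ is invoked at most once per pair $(s,t)$, because the first line of \textbf{Match} records $(s,t)$ in $Visited$ and \textbf{Close} never calls \textbf{Match} on a pair already in $Visited$; so there are $O(n^2)$ calls to \textbf{Match} per execution of \textbf{Bis}. Third, I would bound the work done inside one call to \textbf{Match}, exclusive of the nested \textbf{Match} calls it triggers. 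That work is dominated by \textbf{MatchAction} ranging over all actions $a$ and all pairs of $a$-transitions $s\ar{a}\Delta_i$, $t\ar{a}\Theta_j$; summed over all $O(n^2)$ state pairs, the number of transition pairs $(\Delta_i,\Theta_j)$ considered is $O(n^2)$ as well (since the pLTS is finitary, the total number of transitions is $O(n)$ if we treat the branching degree and $|\Act|$ as bounded — or, more carefully, $O(mn)$ type bounds collapse into the stated $n^7$ once one fixes the convention that $n$ counts states and the transition relation has size polynomial in $n$). Each such pair triggers one call to \textbf{MatchDistribution}, whose cost is one call to \textbf{Check}, i.e. $O(n^3/\log n)$ by Lemma~\ref{l:check.lift}, plus $O(n^2)$ invocations of \textbf{Close}; the \textbf{Close} calls that do not recurse into a fresh \textbf{Match} are $O(1)$ each, and the ones that do are already counted in the $O(n^2)$ \textbf{Match}-calls-per-\textbf{Bis} budget.

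Putting the three factors together: $O(n^2)$ executions of \textbf{Bis}, times $O(n^2)$ calls to \textbf{Match} per execution, times the non-recursive cost of each \textbf{Match}. The non-recursive cost of \textbf{Match} is dominated by its \textbf{Check} calls; charging each of the $O(n^2)$ distribution-pairs its $O(n^3/\log n)$ flow computation, one \textbf{Match} costs $O(n^2 \cdot n^3/\log n) = O(n^5/\log n)$. Hence one \textbf{Bis} costs $O(n^2 \cdot n^5/\log n)=O(n^7/\log n)$, and since \textbf{Bis} runs $O(n^2)$ times the grand total would naively be $O(n^9/\log n)$; the sharper $O(n^7/\log n)$ claimed in the statement must come from amortising the \textbf{Check} computations \emph{across} all executions of \textbf{Bis} — a given pair $(\Delta,\Theta)$ need only be re-checked when the approximant $\aRel$ changes, and $\aRel$ changes $O(n^2)$ times in total (not $O(n^2)$ times per \textbf{Bis}). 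So the honest bookkeeping charges $O(n^3/\log n)$ flow cost to each of $O(n^2)$ distribution-pairs once per each of $O(n^2)$ values of the approximant, i.e. $O(n^2)\cdot O(n^2)\cdot O(n^3/\log n)=O(n^7/\log n)$ in total for all \textbf{Check} calls, which dominates the $O(n^2)\cdot O(n^2)=O(n^4)$ bookkeeping of the match recursion.

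The step I expect to be the main obstacle is exactly this last amortisation: making precise the claim that, although \textbf{Bis} is restarted $O(n^2)$ times, the total number of distinct (network, approximant) pairs on which a maximum-flow computation is ever run is $O(n^4)$ rather than $O(n^6)$. This requires observing that between two consecutive restarts of \textbf{Bis} the set $NotBisim$ — and hence the relation $\aRel$ used in every \textbf{Check} — is monotone, so the work can be attributed to the at most $O(n^2)$ "regimes" delimited by changes to $NotBisim$, within each of which only $O(n^2)$ distribution-pairs and $O(n^3/\log n)$ per pair are incurred. Getting the constants and the counting of transitions right (and settling the convention by which the transition relation's size is folded into powers of $n$) is the fiddly part; the rest is routine composition of the per-operation costs with Lemma~\ref{l:check.lift}.
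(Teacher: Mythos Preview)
Your space argument is fine and matches the paper's. For time, however, you have made the analysis much harder than it needs to be, and the ``main obstacle'' you identify is one of your own creation.

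The paper's argument is direct and avoids any amortisation across runs of \textbf{Bis}. The crucial simplification---which you mention in passing but then abandon---is that the branching degree is treated as a \emph{constant} $c$: the paper writes ``since the pLTS is finitely branching, we could assume that each state has at most $c$ outgoing transitions.'' Under that convention a single call to \textbf{Match}$(s,t)$ invokes \textbf{Check} at most $c^2=O(1)$ times, so its non-recursive cost is $O(n^3/\log n)$, not $O(n^5/\log n)$. There is therefore no need to amortise \textbf{Check} calls across regimes of $NotBisim$: one simply bounds the total number of \textbf{Match} calls over \emph{all} executions of \textbf{Bis}. Since each re-run of \textbf{Bis} adds at least one pair to $NotBisim$, and \textbf{Match} is never called on a pair already in $NotBisim$, the successive runs visit at most $n^2, n^2-1, n^2-2,\ldots$ pairs, giving $O(n^4)$ \textbf{Match} calls in total. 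Multiplying by the $O(n^3/\log n)$ cost per call yields $O(n^7/\log n)$.

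Your route---allowing the branching degree to be as large as $n$ and then trying to recover via amortisation---would need a genuinely different argument to succeed, and the sketch you give is not sound as written: you assert that there are only $O(n^2)$ distribution pairs in total, but with branching degree $\Theta(n)$ the number of transition pairs arising from a single state pair can already be $\Theta(n^2)$, so summed over all state pairs it can be $\Theta(n^4)$, not $O(n^2)$. The paper sidesteps all of this by fixing $c$ as a constant; once you adopt that convention the calculation is a straightforward product $O(n^4)\cdot O(n^3/\log n)$.
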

\begin{proof}
 The number
of state pairs is bounded by $n^2$. In the worst case, each
execution of the function $\textbf{Bis}(s,t)$ only yields one new
pair of states that are not bisimilar. The number of state pairs
examined in the first execution of $\textbf{Bis}(s,t)$ is at most
$O(n^2)$, in the second execution is at most $O(n^2-1)$, $\cdots$.
Therefore, the total number of state pairs examined is at most
$O(n^2+(n^2-1)+\cdots+1)=O(n^4)$. When a state pair $(s,t)$ is
examined, each transition of $s$ is compared with all transitions of
$t$ labelled with the same action. Since the pLTS is finitely
branching, we could assume that each state has at most $c$ outgoing
transitions. Therefore, for each state pair, the number of
comparisons of transitions is bound by $c^2$. As a comparison of two
transitions calls the function \textbf{Check} once, which requires
time $O(n^3/\log n)$ by Lemma~\ref{l:check.lift}. As a result,
examining each state pair takes time $O(c^2 n^3/\log n)$.
Finally, the worst case time complexity of executing
$\textbf{Bisim}(s,t)$ is $O(n^7/\log n)$.

The space requirement of the algorithm is easily seen to be
$O(n^2)$, in view of Lemma~\ref{l:check.lift}.
\end{proof}

\begin{remark} With mild modification, the above algorithm can be adapted to
check probabilistic similarity. We simply remove the underlined part
in the function $\textbf{MatchAction}$; the rest of the algorithm
remains unchanged. Similar to the analysis in
Theorems~\ref{t:correctness} and \ref{t:complexity}, the new
algorithm can be shown to correctly check probabilistic similarity
over finitary pLTSs; its worst case time and space complexities are
still $O(n^7/\log n)$ and  $O(n^2)$, respectively.
\end{remark}

\section{Conclusion}\label{s:conclude}
To define behavioural equivalences or preorders for probabilistic
processes often involves a lifting operation that turns a binary
relation $\aRel$ on states into a relation $\lift{\aRel}$ on
distributions over states. We have shown that several different
proposals for lifting relations can be reconciled. They are nothing
more than different forms of essentially the same lifting operation.
More interestingly, we have discovered that this lifting operation
corresponds well to the Kantorovich metric, a fundamental concept
used in mathematics to lift a metric on states to a metric on
distributions over states, besides the fact the lifting operation is
related to the maximum flow problem in optimisation theory.

The lifting operation leads to a neat notion of probabilistic
bisimulation, for which we have provided logical, metric, and
algorithmic characterisations.
\begin{enumerate}
\item We have
introduced a probabilistic choice modality to specify the behaviour
of distributions of states. Adding the new modality to the
Hennessy-Milner logic and the modal mu-calculus results in an
adequate and an expressive logic w.r.t. probabilistic bisimilarity,
respectively.

\item Due to the correspondence of the lifting operation and the
Kantorovich metric, bisimulations can be naturally characterised as
pseudometrics which are post-fixed points of a monotone function,
and bisimilarity as the greatest post-fixed point of the funciton.

\item We have presented an ``on the fly" algorithm to check if two
states in a finitary pLTS are bisimilar. The algorithm is based on
the close relationship between the lifting operation and the maximum
flow problem.
\end{enumerate}
In the belief that a good scientific concept is often elegant, even
seen from different perspectives, we consider the lifting operation
and probabilistic bisimulation as two concepts in probabilistic
concurrency theory that are formulated in the right way.

\bibliographystyle{abbrv}
\bibliography{main-bibfile}

\end{document}